\def\bM{\mathbf M}           
\def\bT{\mathbf T}           
\def\bw{\mathbf w}           
\def\bd{\mathbf d}           
\def\bbQ{\mathbb Q}          
\def\bq{\mathbf q}           
\def\bf{\mathbf f}           
\def\be{\mathbf e}           
\def\tw{\tilde w}            
\def\bbR{\mathbb R}          
\def\bbN{\mathbb N}          
\def\d{\delta}               
\def\X{\mathcal X}           
\def\1{\mathbf 1}
\def\0{\mathbf 0}
\newtheorem{Lemma}{Lemma}
\newtheorem{Theorem}{Theorem}
\newtheorem{Example}{Example}
\newtheorem{Proposition}{Proposition}
\begin{document}

\begin{frontmatter}

\title{Approximate D-optimal Experimental Design\\with Simultaneous Size and Cost Constraints}
\author{Radoslav Harman\corref{cor1}}
\ead{harman@fmph.uniba.sk}
\author{Eva Benkov\'{a}\corref{cor2}}
\cortext[cor1]{Corresponding author.}

\address{Department of Applied Mathematics and Statistics, Faculty of Mathematics, Physics and Informatics, Comenius University, Mlynsk\'{a} dolina, 84248 Bratislava, Slovakia}

\begin{abstract}
Consider an experiment with a finite set of design points representing permissible trial conditions. Suppose that each trial is associated with a cost that depends on the selected design point. In this paper, we study the problem of constructing an approximate D-optimal experimental design with simultaneous restrictions on the size and on the total cost. For the problem of size-and-cost constrained D-optimality, we formulate an equivalence theorem and rules for the removal of redundant design points. We also propose a simple monotonically convergent ``barycentric'' algorithm that allows us to numerically compute a size-and-cost constrained approximate D-optimal design. 
\end{abstract}

\begin{keyword}
Experimental design \sep D-optimality \sep Cost constraints \sep Barycentric algorithm \sep Multiplicative algorithm
\MSC[2010] 62K05
\end{keyword}

\end{frontmatter}

\section{Introduction}

Consider a statistical experiment consisting of a series of trials. In each trial, the observation depends on a design point $x$ selected from a finite design space $\X$ representing all permissible trial conditions. Without loss of generality, we will assume that $\X:=\{1,2, \dots ,n\}$.  

Usually, resources available for the experiment allow us to perform at most $N$ trials, where $N$ is a number known in advance. Suppose that all relevant properties of an experimental design depend only on the numbers $N_x$ of trials performed in individual design points $x \in \X$. Then, we can represent the experimental design by an $n$-dimensional vector $\bw$ of ``design weights'', with components $w_x:=N_x/N$, $x \in \X$. Using this notation, the restriction on the experimental size can be written in the form 
\begin{equation}
  \sum_{x \in \X} w_x \leq 1. \label{eq:size}
\end{equation}

Suppose also that each trial is associated with a known cost $C_x>0$ depending on the corresponding design point $x \in \X$, and the total cost of the experiment cannot exceed a given limit $B>0$. For each $x \in \X$, let $c_x:=\frac{N}{B}C_x$ be the normalized cost. Then, the total cost constraint can be written in the form
\begin{equation}
  \sum_{x \in \X} c_x w_x \leq 1. \label{eq:cost}
\end{equation}

If the values $c_x$ are the same for all $x \in \X$ then the design has, effectively, only a single constraint. However, \eqref{eq:size} and \eqref{eq:cost} may be both relevant if the costs of trials are unequal, which often occurs in practice. For instance, in an application described in \cite{WSB10}, the design space $\X$ represents time, and the cost of conducting a trial is a non-constant function of the time when the observation is sampled. In \cite{PMFB05}, the design space is the set of all combinations of factor levels, some of which are significantly more expensive than others. 

In some situations, the interpretation of the coefficients $c_x$ may be different from direct financial costs. For example, assume that each trial in $x \in \X$ consumes $c_x$ volume units of a specific substrate, as in \cite{ZZ12}. Then, the restriction on the total available volume of the substrate can be captured by an inequality of the form \eqref{eq:cost}. Yet another example of constraints of the type \eqref{eq:cost} can be found in \cite{DF06}, \cite{DFW08} and \cite{Pronzato10}, where the design space corresponds to treatment doses and the costs represent penalties for doses with low efficacy and high toxicity. See \cite{CF95} for further applications of experimental design under constraints. 

In this paper, we will follow an approximate design theory, that is, we will assume that the weights $w_x$, $x \in \X$, are not restricted to the discrete set $\{0,\frac{1}{N},\frac{2}{N},\ldots,1\}$, but can achieve general real values in the interval $[0,1]$. This ``relaxation'' of weights leads to a convex problem of optimal experimental design (the so-called approximate design problem), which is significantly simpler than its discrete version (the so-called exact design problem). For details, see the monographs \cite{Pazman86}, \cite{FH97}, \cite{Pukelsheim06}, and \cite{ADT07}.

The primary goal of this paper is to propose a method of constructing a $D$-optimal design $\bw^*$ in the set of all approximate designs that satisfy both the size and the cost constraints \eqref{eq:size} and \eqref{eq:cost}, that is,
\begin{equation}\label{eq:Dopt}
\bw^* \in \mathrm{argmax}\left\{\phi(\bw): \bw \geq \0_n, \sum_{x \in \X} w_x \leq 1, \sum_{x \in \X} c_x w_x \leq 1\right\},
\end{equation}
where $\geq$ denotes the componentwise comparison and $\0_n$ is the $n$-dimensional zero vector. In \eqref{eq:Dopt}, the function $\phi: [0,\infty)^n \to [0,\infty)$ is the criterion of $D$-optimality defined by $\phi(\bw):=\det^{1/m}( \bM(\bw))$, where
\begin{equation*}
  \bM(\bw) := \sum_{x \in \X} w_x \bf(x)\bf^{\top}(x)
\end{equation*}
is the standardized information matrix of the size $m \times m$. For simplicity, we will assume regularity in the sense that the vectors  $\bf(1),...,\bf(n)$ span $\bbR^m$, and $\bf(x) \neq \0_m$ for all $x \in \X$.

The vectors $\bf(x)$, $x \in \X$, can represent known regressors of a linear regression model with uncorrelated homoscedastic errors. In this case, the $D$-optimal design minimizes the generalized variance of the best linear unbiased estimator of the model parameter. The vectors $\bf(x)$, $x \in \X$, can also be the gradients of the mean-value function of a non-linear regression model with uncorrelated homoscedastic errors. Then, the solution of \eqref{eq:Dopt} is a size-and-cost constrained locally $D$-optimal design (e.g., Chapter 17 in \cite{ADT07} or Chapter 5 in \cite{PP13}).   

 It is possible to show that the criterion of $D$-optimality is continuous, concave, and homogeneous on $[0,\infty)^n$, see, e.g., Chapter 5 and Section 6.2 in \cite{Pukelsheim06}. In particular, the homogeneity of $\phi$ means that $\phi(\gamma\bw)=\gamma\phi(\bw)$ for any $\bw \geq \0_n$ and any $\gamma \geq 0$. Due to the homogeneity of $\phi$, a statistically natural definition of efficiency of a design $\bw^a$ relative to a design $\bw^b$ with $\phi(\bw^b)>0$ is given by $\mathrm{eff}(\bw^a|\bw^b)=\phi(\bw^a)/\phi(\bw^b)$, cf.\ Section 5.15.\ in \cite{Pukelsheim06}. Moreover, criterion of $D$-optimality is monotonic in the sense $\phi(\bw^a) \leq \phi(\bw^b)$ for any pair $\bw^a$, $\bw^b$ of designs satisfying $\bw^a \leq \bw^b$. 

 Note that for problem \eqref{eq:Dopt} the set of feasible designs is non-empty and compact, therefore the continuity of $\phi$ implies that \eqref{eq:Dopt} has at least one optimal solution $\bw^*$. The assumption $\mathrm{span}\{\bf(1),...,\bf(n)\}=\bbR^m$ entails that $\bM(\bw^*)$ is non-singular, that is, $\phi(\bw^*)>0$. However, for some models the optimal solution of \eqref{eq:Dopt} is not unique.

The assumptions of regularity and properties of $\phi$ imply that
\begin{equation}\label{eq:Doptsize}
 \mathrm{argmax}\left\{\phi(\bw): \bw \geq \0_n, \sum_{x \in \X} w_x \leq 1\right\} = \mathrm{argmax}\left\{\phi(\bw): \bw \geq \0_n, \sum_{x \in \X} w_x = 1\right\}.
\end{equation}
Thus, computing a $D$-optimal design under \eqref{eq:size} is equivalent to computing a standard $D$-optimal design, for which there exist many efficient methods (see \cite{Yu11}, \cite{Sagnol11}, \cite{YBT13}, \cite{LP13}, \cite{PZ14} for some recent results). Similarly, since $c_x>0$ for all $x \in \X$, we have
\begin{equation}\label{eq:Doptcost}
 \mathrm{argmax}\left\{\phi(\bw): \bw \geq \0_n, \sum_{x \in \X} c_x w_x \leq 1\right\}  = \mathrm{argmax}\left\{\phi(\bw): \bw \geq \0_n, \sum_{x \in \X} c_x w_x = 1\right\},
\end{equation}
which is a problem that can be transformed to \eqref{eq:Doptsize} using a suitable change of regressors $\bf(x)$, $x \in \X$; see, e.g., Section 6 in \cite{Elfving52} or the end of Section 10.11 in \cite{ADT07}. However, constructing an approximate optimal design under simultaneous size and cost constraints is more complicated, as we discuss next. 

Let $\bw^s$ be optimal for the size constrained problem \eqref{eq:Doptsize} and let $\bw^c$ be optimal for the cost constrained problem \eqref{eq:Doptcost}. Evidently, if $\bw^s$ satisfies the cost constraint \eqref{eq:cost}, then it is a solution of \eqref{eq:Dopt}. Similarly, if $\bw^c$ satisfies the size constraint \eqref{eq:size}, then it solves \eqref{eq:Dopt}.

Suppose that neither of these two simple cases takes place. Let $\bw^*$ be optimal for \eqref{eq:Dopt}. The homogeneity of $\phi$ and property $\phi(\bw^*)>0$ imply that the two strict inequalities $\sum_x w^*_x < 1$ and $\sum_x c_x w_x^* < 1$ cannot be simultaneously true, that is, $\sum_x w^*_x=1$ or $\sum_x c_x w^*_x=1$.

Assume that $\sum_x w^*_x=1$. Define $\alpha:=\sum_x c_x w^*_x$, $\beta:=\sum_x c_x w^s_x$ and $\gamma:=(\beta-1)/(\beta-\alpha)$. Note that $\beta > 1$, and $\alpha \leq 1$, which means that $\gamma \in (0,1]$.  Let
\begin{equation*}
\bw^{**}:=\gamma\bw^*+(1-\gamma)\bw^s.
\end{equation*}
Clearly, $\sum_x w^{**}_x=1$, since both $\bw^*$ and $\bw^s$ have components summing to one. However, $\sum_x c_x w^{**}_x=1$, i.e., $\bw^{**}$ is feasible for \eqref{eq:Dopt}. At the same time, $\phi(\bw^s) \geq \phi(\bw^*)$. Therefore, since $\bw^{**}$ is a convex combination of $\bw^*$ and $\bw^s$, the concavity of $\phi$ guarantees that $\phi(\bw^{**}) \geq \min\{\phi(\bw^*),\phi(\bw^s)\}=\phi(\bw^*)$. But $\bw^{**}$ is feasible for \eqref{eq:Dopt} and $\bw^*$ is optimal for \eqref{eq:Dopt}. Consequently,  $\bw^{**}$ is also optimal for \eqref{eq:Dopt}.

Using the same reasoning we can prove that if $\sum_x c_x w^*_x=1$, then there also exists a $D$-optimal design $\bw^{**}$ satisfying equalities $\sum_x w^{**}_x=1$ and $\sum_x c_x w^{**}_x=1$. Therefore, it is enough to consider the set $\bbQ^n_+$ of designs $\bw \geq \0_n$ simultaneously satisfying equalities  
\begin{eqnarray}
  \sum_{x \in \X} w_x &=& 1, \label{eq:sizeeq} \\
  \sum_{x \in \X} c_x w_x &=& 1. \label{eq:costeq}
\end{eqnarray}
In other words, once we will be able to find a solution of the ``equality'' size-and-cost constrained problem
 \begin{equation}\label{eq:Dopteq}
\bw^* \in \mathrm{argmax}\left\{\phi(\bw): \bw \geq \0_n, \sum_{x \in \X} w_x = 1, \sum_{x \in \X} c_x w_x = 1\right\},
\end{equation}
we will have an exhaustive method of solving the practically usually more meaningful ``inequality'' size-and-cost constrained problem \eqref{eq:Dopt}. 

If the set $\bbQ^n_+$ of feasible solutions of \eqref{eq:Dopteq} is not empty, it is a convex and compact polyhedron. At the beginning of Section \ref{sec:Theory}, we add some natural assumptions on the normalized costs $c_x$, $x \in \X$, that guarantee $\bbQ^n_+ \neq \emptyset$. Then, it is possible to prove a simple ``equivalence theorem'' for the $D$-optimal size-and-cost constrained design solving \eqref{eq:Dopteq}, as well as some other theoretical properties, cf.\ Section \ref{sec:Theory}.

Analytic solutions of \eqref{eq:Dopteq} are possible only in the simplest cases (such as in Example \ref{ex:simple} at the end of this section). However, there are several general methods of constrained numerical optimization that can be used to develop an efficient algorithm specialized to solve \eqref{eq:Dopteq}.

First, there is a Frank-Wolfe-type ``vertex-direction'' algorithm described in Section 2.2 of \cite{CF95}. This algorithm assumes that at each step, a separate mathematical programming problem is solved. Under \eqref{eq:sizeeq} and \eqref{eq:costeq}, the mathematical programming problem is not difficult, which means that the use of the algorithm would be feasible. However, it can be expected to be even slower than the vertex-direction algorithms for the standard approximate $D$-optimality. 

Another method, which is proposed in \cite{TM01} and generalized in \cite{MTC05}, is motivated by the analytic technique of Lagrange multipliers (cf.\ also \cite{Mikulecka83}). The advantage of this method is that it can be applied to computing designs under a non-linear constraint. For our specific linearly constraint problem, this method is too complicated and rather inefficient, without a proof of convergence.

Next, an interesting possibility is to use an algorithm based on the so-called simplicial decomposition, see \cite{UP07} and references therein. This algorithm is based on alternately solving a linear programming sub-problem and a non-linear restricted master problem which finds the maximum of the objective function over the convex hull of a usually small set of feasible points. In \cite{UP07}, the simplicial decomposition algorithm has been used to compute approximate $D$-optimal designs under box constraints on weights, where, at each step, the master problem is solved by a generalized unconstrained multiplicative algorithm (see \cite{Ucinski05}, cf.\ \cite{HT09}). In a similar way, the simplicial decomposition could be adapted to solving the size-and-cost constrained problem \eqref{eq:Dopteq}.

Approximate $D$-optimal designs under linear constraints can also be computed by modern mathematical programming algorithms, namely maxdet programming (\cite{VBW98}) and semidefinite programming (SDP; cf.\ \cite{BTN87}). These algorithms are very versatile, but their time and memory requirements grow steeply with increasing $n$. Using an SDP solver sdpt3 (\cite{SDPT3}) for Matlab, we were able to solve problems \eqref{eq:Dopt} and \eqref{eq:Dopteq} only for dimensions smaller than $n=4000$ (see Section \ref{sec:Examples} for the specifications of the hardware used).

Finally, for solving $D$-optimal design problems under linear constraints on weights, a promising emerging alternative is a second-order cone programming (SOCP) method developed in \cite{SH14}. Nevertheless, the SOCP methods require very specific software solvers and their actual application for computing optimal designs is technically challenging. Moreover, for the SOCP methods, the degradation of the performance with increasing $n$ is similar to SDP.

Therefore, for computing solutions of problem \eqref{eq:Dopteq}, we decided to construct a specification of the barycentric multiplicative algorithm introduced in \cite{Harman14}. The proposed algorithm has favourable properties similar to standard multiplicative algorithms (see \cite{TM06}, \cite{HP07}, \cite{DPZ08}, \cite{Y10annals}, \cite{Y10csda} for some recent advances in multiplicative methods). More precisely, the barycentric algorithm is very easy to implement and, under mild technical conditions, it has guaranteed monotonic convergence to the optimum. Moreover, the algorithm can be seamlessly combined with stopping rules based on statistical efficiency, as well as with rules for the removal of redundant design points, which yields much more efficient computation. Compared to the vertex direction and the simplicial decomposition methods, the barycentric algorithm does not need to solve a separate optimization problem at each iteration. In contrast to SDP and SOCP, the proposed algorithm has very small memory requirements and can be applied to a large dimension $n$ of the vector of weights.

Naturally, from the point of view of applications, the most important is to find an \emph{exact} experimental design with weights restricted to the set $\{0,\frac{1}{N},\frac{2}{N},\ldots,1\}$, i.e., such that the numbers $N_x=Nw_x$ are integer. An efficient solution of the exact $D$-optimal design problem under \eqref{eq:size} and \eqref{eq:cost} can often be obtained using the corresponding approximate $D$-optimal design, especially if $N$ is large. The simplest method is to ``round down'' an approximate $D$-optimal design $\bw^*$ by replacing the values $w^*_x$ with $\lfloor Nw^*_x \rfloor /N$ for all $x \in \X$, where $\lfloor \cdot \rfloor$ is the floor function. A more efficient solution can be obtained by using an excursion heuristic such as the one proposed in \cite{HBF14}, with the result of the simple rounding taken as an initial feasible solution. Alternatively, it is possible to use the heuristic method from \cite{HF14} based on integer quadratic programming, which utilizes the information matrix of the approximate $D$-optimal design. Another possibility is to use the solution of the approximate problem to identify a small subset of $\X$ that is likely to support the exact $D$-optimal constrained design, and then apply the mixed integer SOCP method described in \cite{SH14}. 

This paper is structured as follows. In Section \ref{sec:Theory}, we provide selected theoretical results about the optimization problem \eqref{eq:Dopteq}, including the so-called equivalence theorem and the rules for the removal of redundant design points. Section \ref{sec:Algorithm} describes the multiplicative barycentric algorithm for computing solutions of \eqref{eq:Dopteq}. Section \ref{sec:Examples} gives a more complex example of a $D$-optimal size-and-cost constrained design and provides a small numerical study exploring the behaviour of the proposed barycentric algorithm. In Section \ref{sec:Notes}, we provide miscellaneous short remarks related to the constrained optimal design problems \eqref{eq:Dopt} and \eqref{eq:Dopteq}. We deferred technical proofs into Section \ref{sec:Appendix}.

Before proceeding, we give a small example to illustrate some aspects of problems \eqref{eq:Dopt} and \eqref{eq:Dopteq}, in particular the fact that for the equality-constrained $D$-optimal design problem \eqref{eq:Dopteq} the values $c_x=1$ play special roles.  

\begin{Example}\label{ex:simple}
Assume that $n=2$, $\bf(1)=(1,0)^{\top}$, $\bf(2)=(1,1)^{\top}$. In this elementary model, it is simple to verify that for a design $\bw=(w_1,w_2)^{\top}$ the criterion of $D$-optimality  is proportional to $\sqrt{w_1w_2}$, and the solutions of both \eqref{eq:Dopt} and \eqref{eq:Dopteq} can be calculated analytically: the solution of \eqref{eq:Dopt} is 
\begin{equation*}
(w_1^*,w_2^*)^\top=
\begin{cases}
(0.5,0.5)^\top &\textrm{if } c_1+c_2\leq 2,\\
\left(\frac{c_1-1}{c_1 - c_2}, \frac{c_2-1}{c_2 - c_1}\right)^\top &\textrm{if } c_1+c_2>2 \text{ and } \frac{1}{2c_1}+\frac{1}{2c_2}>1,\\
\left(\frac{1}{2c_1}, \frac{1}{2c_2}\right)^\top &\textrm{if } \frac{1}{2c_1}+\frac{1}{2c_2}\leq 1,
\end{cases}
\end{equation*}
and the solution of \eqref{eq:Dopteq} is
\begin{equation*}
(w_1^*,w_2^*)^\top=
\begin{cases}
(0.5,0.5)^\top &\textrm{if } c_1=c_2=1,\\
\left(\frac{c_1-1}{c_1 - c_2},\frac{c_2-1}{c_2 - c_1}\right)^\top &\textrm{if } (c_1,c_2)^\top \in ((0,1) \times (1,2)) \cup ((1,2) \times (0,1)).
\end{cases}
\end{equation*}
\begin{figure}[h]
   \captionsetup{width=0.87\textwidth}
   \centering
   \begin{subfigure}[b]{0.4\textwidth}
      \includegraphics[width=\textwidth]{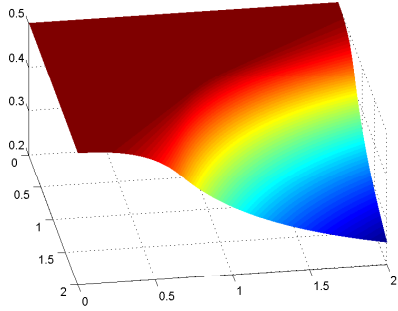}
      \caption{$w_1+w_2 \leq 1, \: c_1w_1+c_2w_2 \leq 1$}
      \label{fig:ineq}
   \end{subfigure}\qquad
   \begin{subfigure}[b]{0.4\textwidth}
      \includegraphics[width=\textwidth]{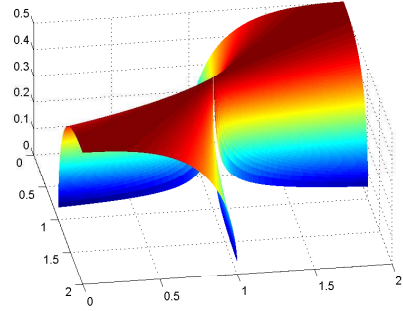}
      \caption{$w_1+w_2 = 1,\: c_1w_1+c_2w_2 = 1$}
      \label{fig:eq}
   \end{subfigure}
 \caption{The values of $\phi$ corresponding to the $D$-optimal designs with $n=2$ design points, regressors $\bf(1)=(1,0)^{\top}$, $\bf(2)=(1,1)^{\top}$ and the costs $c_1,c_2$ varying in $(0,2)$.}\label{fig:c1c2}
\end{figure}

In Figure \ref{fig:c1c2}, we plotted the values of the $D$-criterion in the constrained $D$-optimal designs, as they depend on the costs $c_1$ and $c_2$. For the inequality-constrained problem \eqref{eq:Dopt} illustrated in Figure \ref{fig:ineq}, the optimal criterial values are continuous and non-decreasing for decreasing costs, as expected.

However, the optimal criterial values of the equality-constrained problem \eqref{eq:Dopteq} behave differently. First, in Figure \ref{fig:eq}, the domain of the function is restricted to $\mathcal{C}:=((0,1)\times(1,2)) \cup ((1,2)\times(0,1)) \cup \{(1,1)\}$ because for couples $(c_1,c_2) \in (0,2) \times (0,2) \setminus \mathcal{C}$ there is no feasible solution of \eqref{eq:Dopteq} or the optimal information matrix is singular. Moreover, observe that it is not possible to continuously extend the function in Figure \ref{fig:eq} to the point $(1,1)$, although for $c_1=c_2=1$ the optimization problem \eqref{eq:Dopteq} is meaningful with a unique solution. This phenomenon suggests that the points $x \in \X$ such that $c_x=1$ play a special role. Furthermore, note that in Figure \ref{fig:eq} the optimal criterial value can strictly decrease with decreasing costs.
\end{Example}

\section{Theoretical results for $D$-optimal size-and-cost constrained designs}\label{sec:Theory}

If $c_x \leq 1$ for all $x \in \X$, i.e., if the costs of all trials are ``low'', then every design satisfying the size constraint \eqref{eq:size} satisfies also the cost constraint \eqref{eq:cost}, that is, an optimal solution of \eqref{eq:Dopt} can be found as a solution of \eqref{eq:Doptsize}. Analogously, if $c_x \geq 1$ for all $x \in \X$, that is, if the costs of all trials are ``high'', then every design satisfying the cost constraint \eqref{eq:cost} satisfies also the size constraint \eqref{eq:size}, that is, an optimal solution of \eqref{eq:Dopt} can be found as a solution of \eqref{eq:Doptcost}. Therefore, we can assume that there exist $x_- \in \X$ such that $c_{x_-}<1$ and $x_+ \in \X$ such that $c_{x_+}>1$.

Let $\X_+:=\{x \in \X: c_x > 1\}$, $\X_-:=\{x \in \X: c_x<1\}$, $\X_0:=\{x \in \X: c_x = 1\}$, and let $n_+, n_-, n_0$ be the sizes of these sets. Clearly, our assumptions mean that $\X_+  \neq \emptyset$ as well as $\X_- \neq \emptyset$. For simplicity, in Sections \ref{sec:Theory} and \ref{sec:Algorithm} we will assume that $\X_0 \neq \emptyset$; all results can be modified in a straightforward way for $\X_0=\emptyset$.

Recall that the set of all feasible designs of \eqref{eq:Dopteq} is  denoted $\bbQ^n_+$. We will use the symbol $\bbQ^n_{++}$ to denote the set of all designs $\bw \in \bbQ^n_+$ with all components strictly positive. We will use the symbol $\bbQ^n_r$ to denote the set of all designs $\bw \in \bbQ^n_+$ with a non-singular information matrix $\bM(\bw)$. Note that the regularity assumptions imply $\bbQ^n_{++} \subseteq \bbQ^n_r$.

 Define $\delta_{x_+}:=c_{x_+}-1>0$ for $x_+ \in \X_+$ and $\delta_{x_-}:=1-c_{x_-}>0$ for $x_- \in \X_-$. Consider the design $\bw^{(0)}$ with components
\begin{eqnarray}
w^{(0)}_{x_+} &:=& \tilde{n}^{-1} \sum_{x_- \in \X_-} \frac{\delta_{x_-}}{\delta_{x_+}+\delta_{x_-}}; \hspace{5mm} x_+ \in \X_+, \label{eq:w0plus} \\
w^{(0)}_{x_-} &:=& \tilde{n}^{-1} \sum_{x_+ \in \X_+} \frac{\delta_{x_+}}{\delta_{x_+}+\delta_{x_-}}; \hspace{5mm} x_- \in \X_-, \label{eq:w0minus}\\
w^{(0)}_{x_0} &:=& \tilde{n}^{-1}; \hspace{5mm} x_0 \in \X_0, \label{eq:w0zero}
\end{eqnarray}
where $\tilde{n}:=n_+n_- +n_0$. It is straightforward to verify that $\bw^{(0)}$ is feasible for \eqref{eq:Dopteq}, i.e., $\bbQ^n_+ \neq \emptyset$. Moreover, $\bw^{(0)} \in \bbQ^n_{++}$, that is, $\bw^{(0)} \in \bbQ^n_r$. Hence, the information matrix of the design optimal for \eqref{eq:Dopteq} is non-singular. The strict concavity of $\det^{1/m}(\cdot)$ on the set of all positive definite matrices (\cite{Pukelsheim06}, Section 6.13) guarantees that the optimal information matrix is unique. 

For any $x_+ \in \X_+$, $x_- \in \X_-$ let
\begin{equation*}
  \bq^{(x_+,x_-)}:=\frac{\d_{x_-}}{\d_{x_+}+\d_{x_-}}\be^{(x_+)}+\frac{\d_{x_+}}{\d_{x_+}+\d_{x_-}}\be^{(x_-)}
\end{equation*}
and for any $x_0 \in \X_0$ let $\bq^{(x_0)}=\be^{(x_0)}$, where $\be^{(x)}$, $x \in \X$, are standard unit vectors.
It is simple to show that $\{\bq^{(x_+,x_-)}: x_+ \in \X_+, x_- \in \X_-\} \cup \{\bq^{(x_0)}: x_0 \in \X_0\}$ is the set of all extreme vectors of the polytope $\bbQ^n_+$. In fact, the design $\bw^{(0)}$ defined by \eqref{eq:w0plus}-\eqref{eq:w0zero} is the ``center of mass'' of these extreme vectors if they are assigned equal weights.

For any $\bw \in \bbQ^n_r$, let $\bd(\bw)$ denote the variance (sensitivity) function, which is, in our case, the $n$-dimensional vector with components 
\begin{equation*}
 d_x(\bw) := \bf^{\top}(x) \bM^{-1}(\bw) \bf(x); \: x \in \X.
\end{equation*}
In the standard linear regression model with regressors $\bf(x)$, $x \in \X$, and homoscedastic uncorrelated errors, the value $d_x(\bw)$ is proportional to the variance of the predicted response in the point $x$ under the design $\bw$, see, e.g., Section 2.1 in \cite{FH97} or Section 9.1 in \cite{ADT07}. 

For  any $\bw \in \bbQ^n_r$ and $x_+ \in \X_+$, $x_- \in \X_-$ define the weighted variances
 \begin{equation}\label{eq:Delta}
   \tilde{d}_{x_+ x_-}(\bw) := \frac{\d_{x_+}d_{x_-}(\bw)+\d_{x_-}d_{x_+}(\bw)}{\d_{x_+}+\d_{x_-}}.
 \end{equation}

The form of the extreme vectors of $\bbQ^n_+$ and Theorem 2 from \cite{Harman14} imply the following two theorems. The first one is an ``equivalence theorem'' that characterizes approximate size-and-cost constrained $D$-optimality \eqref{eq:Dopteq}, similarly to the characterization of the standard approximate $D$-optimality, cf.\ Proposition IV.6 in \cite{Pazman86}, Theorem 2.4.1 in \cite{FH97} or Section 9.2 in \cite{ADT07}.

\begin{Theorem}\label{thm:equivalence}
  Let $\bw \in \bbQ^n_r$. Then, $\bw$ is $D$-optimal in $\bbQ^n_+$ if and only if $\max_{x_0 \in \X_0}d_{x_0}(\bw) \leq m$ and 
  \begin{equation*}
    \max_{x_+ \in \X_+}\frac{d_{x_+}(\bw)-m}{\d_{x_+}}+\max_{x_- \in \X_-}\frac{d_{x_-}(\bw)-m}{\d_{x_-}} \leq 0.
   \end{equation*}
\end{Theorem}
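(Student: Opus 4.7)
The plan is to specialize the standard first-order characterization of a concave maximum on a polytope to $\bbQ^n_+$. Since $\phi$ is concave on $[0,\infty)^n$ and differentiable at every $\bw \in \bbQ^n_r$, and since $\bbQ^n_+$ is the convex hull of its finitely many extreme vectors, a point $\bw \in \bbQ^n_r$ will be D-optimal in $\bbQ^n_+$ if and only if the directional derivative of $\phi$ at $\bw$ towards every extreme vector $\bq$ of $\bbQ^n_+$ is nonpositive. A direct calculation using $\nabla \det(\bM(\bw)) = \det(\bM(\bw))\bd(\bw)$ gives
\[
\partial_{\bq-\bw}\phi(\bw) = \frac{\phi(\bw)}{m}\bigl(\bq^\top\bd(\bw)-\bw^\top\bd(\bw)\bigr) = \frac{\phi(\bw)}{m}\bigl(\bq^\top\bd(\bw)-m\bigr),
\]
where I have used $\bw^\top\bd(\bw)=\mathrm{tr}(\bM^{-1}(\bw)\bM(\bw))=m$. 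Thus the optimality criterion will reduce to $\bq^\top\bd(\bw)\leq m$ at every extreme $\bq$ of $\bbQ^n_+$, which is exactly the content of Theorem~2 of \cite{Harman14} as cited in the preamble to the statement.

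Next, I would substitute the two families of extreme vectors listed above. Evaluated at $\bq^{(x_0)}=\be^{(x_0)}$ with $x_0\in\X_0$, the condition becomes $d_{x_0}(\bw)\leq m$, i.e.\ $\max_{x_0\in\X_0}d_{x_0}(\bw)\leq m$. Evaluated at $\bq^{(x_+,x_-)}$ with $x_+\in\X_+$ and $x_-\in\X_-$, a short computation identifies $\bq^{(x_+,x_-)\top}\bd(\bw)$ with the weighted variance $\tilde d_{x_+x_-}(\bw)$ from \eqref{eq:Delta}, so the condition becomes $\tilde d_{x_+x_-}(\bw)\leq m$ for every such pair.

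The rest is algebra. Expanding the definition of $\tilde d_{x_+x_-}(\bw)$ and subtracting $m$ from both sides yields
\[
\tilde d_{x_+x_-}(\bw)-m = \frac{\d_{x_+}\d_{x_-}}{\d_{x_+}+\d_{x_-}}\left(\frac{d_{x_+}(\bw)-m}{\d_{x_+}}+\frac{d_{x_-}(\bw)-m}{\d_{x_-}}\right),
\]
and because the prefactor is strictly positive, the inequality $\tilde d_{x_+x_-}(\bw)\leq m$ will be equivalent to nonpositivity of the bracket. Since the two summands in the bracket depend on disjoint indices, maximizing over $(x_+,x_-)$ separates, so the family of inequalities will collapse to
\[
\max_{x_+\in\X_+}\frac{d_{x_+}(\bw)-m}{\d_{x_+}}+\max_{x_-\in\X_-}\frac{d_{x_-}(\bw)-m}{\d_{x_-}}\leq 0,
\]
which combined with the $\X_0$ condition gives the theorem.

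The preparatory material already supplied by the paper absorbs what would otherwise be the hard part: the enumeration of extreme vectors of $\bbQ^n_+$ and the general vertex characterization of D-optimality from \cite{Harman14}. Once those are accepted, the only delicate point that remains is checking that the reduction from "$\tilde d_{x_+x_-}(\bw)\leq m$ for all pairs" to a separable maximum is an equivalence rather than just an implication, which is clear because the two maxima are attained independently in $x_+$ and $x_-$; everything else is bookkeeping.
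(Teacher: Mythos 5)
Your proposal is correct and follows essentially the same route as the paper, which obtains the theorem directly from the enumeration of the extreme vectors of $\bbQ^n_+$ together with the vertex-wise optimality criterion $\bq^{\top}\bd(\bw)\leq m$ of Theorem~2 in \cite{Harman14}. You merely make explicit what the paper leaves as a citation and as routine algebra: the first-order concavity argument yielding the vertex criterion, the identification $\bq^{(x_+,x_-)\top}\bd(\bw)=\tilde d_{x_+x_-}(\bw)$, and the separation of the maximum over pairs $(x_+,x_-)$, all of which check out.
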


It is also possible to formulate an alternative equivalence theorem, analogous to Theorem 4.1.1 in \cite{FH97}. However, the necessary and sufficient condition in Theorem \ref{thm:equivalence} is simpler and more straightforward to verify.

 The second theorem can be used with any sub-optimal feasible design $\bw \in \bbQ^n_r$ to compute a lower bound for its efficiency and delete the points from $\X$ that cannot be in the support of any $D$-optimal size-and-cost constrained design.

\begin{Theorem}\label{thm:effdel}
  Let $\bw \in \bbQ^n_r$, let $\bw^*$ be a design that solves \eqref{eq:Dopteq}, and let
  \begin{equation*}
     \epsilon=\max\left(\max_{x_+ \in \X_+, x_- \in X_-} \tilde{d}_{x_+ x_-}(\bw), \max_{x_0 \in \X_0} d_{x_0}(\bw)\right)-m. 
  \end{equation*}     
  Then, $\mathrm{eff}(\bw|\bw^*) \geq \frac{m}{m+\epsilon}$. Let
  \begin{equation*}
    h_m(\epsilon)=m\left(1+\frac{\epsilon}{2} - \frac{\sqrt{\epsilon(4+\epsilon-4/m)}}{2}\right).
  \end{equation*}    
  Then,\\
  (i) $\max_{x_- \in \X_-} \tilde{d}_{x_+ x_-}(\bw) < h_m(\epsilon)$ for some $x_+ \in \X_+$ implies $w^*_{x_+}=0$.\\
  (ii) $\max_{x_+ \in \X_+} \tilde{d}_{x_+ x_-}(\bw) < h_m(\epsilon)$ for some $x_- \in \X_-$ implies $w^*_{x_-}=0$.\\
  (iii) $d_{x_0}(\bw) < h_m(\epsilon)$ for some $x_0 \in \X_0$ implies $w^*_{x_0}=0$.
\end{Theorem}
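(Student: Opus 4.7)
The plan is to view (\ref{eq:Dopteq}) as maximizing the positively homogeneous concave criterion $\phi$ over the polytope $\bbQ^n_+$ whose extreme vectors $\bq^{(x_+,x_-)}$ and $\bq^{(x_0)}$ have just been identified, and then to apply Theorem 2 of \cite{Harman14}. As a preparatory step, I would evaluate the directional sensitivities $\bq^\top \bd(\bw)$ at each extreme vector. Since $\bq^{(x_0)}=\be^{(x_0)}$, one immediately gets $(\bq^{(x_0)})^\top \bd(\bw)=d_{x_0}(\bw)$, and a direct computation from \eqref{eq:Delta} yields $(\bq^{(x_+,x_-)})^\top \bd(\bw)=\tilde d_{x_+x_-}(\bw)$. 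Consequently the quantity $m+\epsilon$ in the statement is precisely the maximum of $\bq^\top \bd(\bw)$ over all extreme vectors $\bq$ of $\bbQ^n_+$.

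For the efficiency bound I would exploit positive homogeneity of degree $1$ of $\phi$, which gives $\nabla \phi(\bw)=m^{-1}\phi(\bw)\bd(\bw)$ and $\bw^\top \bd(\bw)=m$ for every $\bw \in \bbQ^n_r$. Writing $\bw^*$ as a convex combination of the extreme vectors of $\bbQ^n_+$ delivers the linear bound $(\bw^*)^\top \bd(\bw) \le m+\epsilon$. Concavity of $\phi$ then yields $\phi(\bw^*) \le \phi(\bw)+m^{-1}\phi(\bw)((\bw^*)^\top \bd(\bw)-m) \le \phi(\bw)(m+\epsilon)/m$, which rearranges to $\mathrm{eff}(\bw|\bw^*) \ge m/(m+\epsilon)$, as claimed.

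For the deletion statements (i)--(iii) the argument is a quantitative swap. Assume for instance that $w^*_{x_+}>0$ for some $x_+ \in \X_+$; in the extreme-vector decomposition of $\bw^*$, the total mass carried by vectors $\bq^{(x_+,x_-)}$ involving this particular $x_+$ is then strictly positive. I would reassign that mass to an extreme vector attaining maximum sensitivity $m+\epsilon$ and evaluate the resulting change in $\log \det \bM(\bw^*)$ via the rank-one update formula, controlling the nonlinear correction by the efficiency bound obtained above. Requiring the first-order gain to dominate the worst-case second-order loss produces a quadratic inequality in the sensitivity level at the removed extreme vectors, of which $h_m(\epsilon)$ is the smaller root; under the assumption $\max_{x_-}\tilde d_{x_+x_-}(\bw)<h_m(\epsilon)$ the improvement becomes strict, contradicting optimality of $\bw^*$. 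Parts (ii) and (iii) follow by symmetry between $\X_+$ and $\X_-$, and by the even simpler structure that $\bq^{(x_0)}=\be^{(x_0)}$ isolates one coordinate.

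The only delicate piece is the rank-one analysis producing the exact form of $h_m(\epsilon)$; my intention is to invoke Theorem 2 of \cite{Harman14} as a black box once the extreme-vector/sensitivity dictionary from the preparatory step is in place, so that parts (i)--(iii) reduce to matching the general polytope setting with our concrete $\bbQ^n_+$, while the efficiency bound is obtained transparently from concavity alone.
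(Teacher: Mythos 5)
Your proposal follows essentially the same route as the paper: the paper's entire proof of Theorem \ref{thm:effdel} is the remark that the form of the extreme vectors of $\bbQ^n_+$ together with Theorem 2 of \cite{Harman14} implies it, and your preparatory dictionary $(\bq^{(x_+,x_-)})^\top\bd(\bw)=\tilde d_{x_+x_-}(\bw)$ and $(\bq^{(x_0)})^\top\bd(\bw)=d_{x_0}(\bw)$ is exactly the content of that reduction. Your additional direct derivation of the efficiency bound from concavity and homogeneity is correct, and deferring the precise form of $h_m(\epsilon)$ in parts (i)--(iii) to the cited general theorem is precisely what the paper itself does.
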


The removal of ``redundant'' design points based on Theorem \ref{thm:effdel} can greatly enhance the speed of numerical methods for computing optimal designs, such as the barycentric algorithm derived in the next section.

\section{Barycentric algorithm for computing $D$-optimal size-and-cost constrained designs}\label{sec:Algorithm}

The barycentric algorithm is a multiplicative method proposed in \cite{Harman14} for computing approximate $D$-optimal designs under linear constraints on the vector of weights. The key component of the barycentric algorithm is a formula for (generalised) barycentric coordinates of each $\bw \in \bbQ^n_+$ in a system given by the set of all extreme vectors of $\bbQ^n_+$.

For constraints \eqref{eq:sizeeq} and \eqref{eq:costeq}, the barycentric transformation $\bT^B: \bbQ^n_r \to \bbQ^n_r$ has the form (cf.\ equations (3) and (4) in \cite{Harman14}):
\begin{eqnarray}
  \bT^B(\bw)&=&\frac{1}{m}\mathbf{D}(\bw)\bd(\bw), \text{ where } \label{eq:barygen}\\
  \mathbf{D}(\bw)&=&\left(\sum_{x_+ \in \X_+}\sum_{x_- \in \X_-} \tw_{x_+x_-}(\bw)\bq^{(x_+,x_-)}(\bq^{(x_+,x_-)})^{\top}+\sum_{x_0 \in \X_0} \tw_{x_0}(\bw)\bq^{(x_0)}(\bq^{(x_0)})^{\top}\right). \label{eq:barymat}
\end{eqnarray}
In \eqref{eq:barymat}, the functions $\tw_{x_+x_-}: \bbQ^n_+ \to \bbR$; $x_+ \in \X_+, x_- \in \X_-$, and $\tw_{x_0}:\bbQ^n_+ \to \bbR$; $x_0 \in \X_0$, are the barycentric coordinates, that is, they are non-negative and satisfy
\begin{eqnarray}
  \sum_{x_+ \in \X_+}\sum_{x_- \in \X_-} \tw_{x_+x_-}(\bw)+\sum_{x_0 \in \X_0} \tw_{x_0}(\bw)&=&1, \label{eq:bary1}\\
  \sum_{x_+ \in \X_+}\sum_{x_- \in \X_-} \tw_{x_+x_-}(\bw)\bq^{(x_+,x_-)}+\sum_{x_0 \in \X_0} \tw_{x_0}(\bw)\bq^{(x_0)} &=& \bw \label{eq:baryw}
\end{eqnarray}
for all $\bw \in \bbQ^n_+$. For the general theory from \cite{Harman14} to be applicable, the barycentric coordinates must be chosen such that they are continuous on $\bbQ^n$ and strictly positive for any $\bw \in \bbQ^n_{++}$.

For all $\bw \in \bbQ^n_+$, denote 
\begin{equation*}
S(\bw):=\sum_{x_+ \in \X_+} \delta_{x_+}w_{x_+} = \sum_{x_- \in \X_-} \delta_{x_-}w_{x_-},
\end{equation*}
where the second equality follows directly from \eqref{eq:sizeeq} and \eqref{eq:costeq}.

The barycentric coordinates are not uniquely defined, and not all choices of barycentric coordinates are equally good. It turns out that for problem \eqref{eq:Dopteq} a suitable definition of barycentric coordinates of $\bw \in \bbQ^n_+$ is
\begin{eqnarray}
    \tw_{x_+ x_-}(\bw)&=& 
      \begin{dcases}
        \frac{(\d_{x_+}+\d_{x_-})w_{x_+}w_{x_-}}{S(\bw)} & \text{ if } S(\bw)>0, \label{eq:baryweights_pm} \\
        0 & \text{ if } S(\bw)=0,
      \end{dcases} \\
    \tw_{x_0}(\bw)&=&w_{x_0}, \label{eq:baryweights_0}
\end{eqnarray}
for all $x_+ \in \X_+$, $x_- \in \X_-$, and $x_0 \in \X_0$.

\begin{Proposition}\label{lem:baryweights}
  Let $x_+ \in \X_+$, $x_- \in \X_-$, $x_0 \in \X_0$. The functions $\tw_{x_+x_-}: \bbQ^n_+ \to \bbR$, and $\tw_{x_0}:\bbQ^n_+ \to \bbR$ defined by \eqref{eq:baryweights_pm} and \eqref{eq:baryweights_0} are non-negative, continuous on $\bbQ^n_+$, and positive on $\bbQ^n_{++}$. Moreover, for any $\bw \in \bbQ^n_+$ the functions satisfy \eqref{eq:bary1} and \eqref{eq:baryw}. 
\end{Proposition}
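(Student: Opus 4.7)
The plan is to verify the four claims (non-negativity, continuity, positivity, and the two identities) directly from the definitions, repeatedly using the two-sided identity
\begin{equation*}
S(\bw)=\sum_{x_+ \in \X_+}\d_{x_+}w_{x_+}=\sum_{x_- \in \X_-}\d_{x_-}w_{x_-}.
\end{equation*}
Non-negativity is immediate from $\d_{x_+},\d_{x_-}>0$, $\bw \geq \0_n$ and $S(\bw) \geq 0$; positivity on $\bbQ^n_{++}$ follows at once because $\bw \in \bbQ^n_{++}$ forces $w_{x_+}w_{x_-}>0$ and hence $S(\bw)>0$, while $\tw_{x_0}(\bw)=w_{x_0}>0$ by assumption.

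For continuity on $\bbQ^n_+$, the only delicate case is a point $\bw^{(0)} \in \bbQ^n_+$ with $S(\bw^{(0)})=0$. The two-sided identity forces $w^{(0)}_{x_+}=0$ for every $x_+ \in \X_+$ and $w^{(0)}_{x_-}=0$ for every $x_- \in \X_-$, so the value assigned by \eqref{eq:baryweights_pm} is $\tw_{x_+ x_-}(\bw^{(0)})=0$. To show this is indeed the limit, I would use the uniform bound
\begin{equation*}
0 \leq \frac{w_{x_+}w_{x_-}}{S(\bw)} \leq \frac{\bigl(S(\bw)/\d_{x_+}\bigr)\bigl(S(\bw)/\d_{x_-}\bigr)}{S(\bw)}=\frac{S(\bw)}{\d_{x_+}\d_{x_-}},
\end{equation*}
which is valid for any $\bw \in \bbQ^n_+$ with $S(\bw)>0$ because every summand in $S(\bw)$ is non-negative, so $\d_{x_+}w_{x_+}\leq S(\bw)$ and $\d_{x_-}w_{x_-}\leq S(\bw)$. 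The right-hand side tends to $0$ as $\bw \to \bw^{(0)}$, giving continuity; for $\tw_{x_0}(\bw)=w_{x_0}$ continuity is trivial. This is the main technical hurdle in the proposition, and the above bound resolves it cleanly.

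For \eqref{eq:bary1} in the generic case $S(\bw)>0$, I would factor the double sum,
\begin{equation*}
\sum_{x_+,x_-}\tw_{x_+ x_-}(\bw)=\frac{1}{S(\bw)}\left(\sum_{x_+}\d_{x_+}w_{x_+}\sum_{x_-}w_{x_-}+\sum_{x_+}w_{x_+}\sum_{x_-}\d_{x_-}w_{x_-}\right)=\sum_{x_+}w_{x_+}+\sum_{x_-}w_{x_-},
\end{equation*}
so the left side of \eqref{eq:bary1} equals $\sum_{x \in \X} w_x=1$ by \eqref{eq:sizeeq}; if $S(\bw)=0$ the double sum is $0$ by definition and $\sum_{x_0}w_{x_0}=1$ follows directly from \eqref{eq:sizeeq}. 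For \eqref{eq:baryw} I would match components: since $\bq^{(x_+,x_-)}$ puts mass $\d_{x_-}/(\d_{x_+}+\d_{x_-})$ on the coordinate $x_+$ only, for any $y \in \X_+$ the $\X_+$-component of the left-hand side equals
\begin{equation*}
\sum_{x_-}\tw_{y x_-}(\bw)\frac{\d_{x_-}}{\d_y+\d_{x_-}}=\frac{w_y}{S(\bw)}\sum_{x_-}\d_{x_-}w_{x_-}=w_y;
\end{equation*}
the case $y \in \X_-$ is symmetric, and $y \in \X_0$ is immediate from $\bq^{(x_0)}=\be^{(x_0)}$ and $\tw_{x_0}(\bw)=w_{x_0}$. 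The degenerate situation $S(\bw)=0$ reduces \eqref{eq:baryw} to the trivial identity $\bw=\sum_{x_0}w_{x_0}\be^{(x_0)}$ since all other components of $\bw$ vanish. Everything beyond the continuity argument is routine symmetric manipulation of sums.
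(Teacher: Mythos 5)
Your proof is correct and follows essentially the same route as the paper: the same bounds $\d_{x_+}w_{x_+}\leq S(\bw)$, $\d_{x_-}w_{x_-}\leq S(\bw)$ give the key estimate forcing continuity at points with $S(\bw)=0$, and the identities \eqref{eq:bary1} and \eqref{eq:baryw} are verified by the same componentwise summation (you merely spell out the normalization \eqref{eq:bary1}, which the paper calls straightforward). No gaps.
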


The barycentric algorithm starts with a design $\bw^{(0)} \in \bbQ^n_{++}$ and computes a sequence of designs $\{\bw^{(t)}\}_{t=0}^\infty$ by
\begin{equation*}
  \bw^{(t+1)}=\bT^B(\bw^{(t)}) \text{ for all } t=0,1,2,...
\end{equation*}
until some convergence criterion is satisfied, for instance based on the efficiency bound from Theorem \ref{thm:effdel}. For the practical utility of the resulting algorithm the barycentric coordinates must be chosen such that the transformation $\bT^B$ has a computationally efficient form and guarantees that the sequence $\{\phi(\bw^{(t)})\}_{t=0}^\infty$ converges to the optimal criterial value.
 
 Let us derive the form of the barycentric transformation \eqref{eq:barygen} for any $\bw \in \bbQ^n_{++}$. The diagonal element of the update matrix \eqref{eq:barymat} corresponding to $y_+ \in \X_+$ is
 \begin{eqnarray}\label{eq:D1}
  \left(\mathbf{D}(\bw)\right)_{y_+y_+} &=& \sum_{x_+ \in \X_+}\sum_{x_- \in \X_-} \tw_{x_+x_-}(\bw)(q^{(x_+,x_-)}_{y_+})^2+\sum_{x_0 \in \X_0} \tw_{x_0}(\bw)(q^{(x_0)}_{y_+})^2 \nonumber \\
  &=& \sum_{x_- \in \X_-} \frac{(\d_{y_+}+\d_{x_-})w_{y_+} w_{x_-}}{S(\bw)} \left(\frac{\d_{x_-}}{\d_{y_+}+\d_{x_-}}\right)^2
  = \frac{w_{y_+}}{S(\bw)}\sum_{x_- \in \X_-} \frac{w_{x_-} \d_{x_-}^2}{\d_{y_+}+\d_{x_-}}.
 \end{eqnarray}
  An analogous formula is valid for $y_- \in \X_-$. For $y_+ \in \X_+$ and $y_- \in \X_-$ the element $(y_+,y_-)$ of the update matrix is
  \begin{eqnarray}\label{eq:D2}
  \left(\mathbf{D}(\bw)\right)_{y_+y_-}&=&\sum_{x_+ \in \X_+}\sum_{x_- \in \X_-} \tw_{x_+x_-}(\bw)q^{(x_+,x_-)}_{y_+}q^{(x_+,x_-)}_{y_-}+\sum_{x_0 \in \X_0} \tw_{x_0}(\bw)q^{(x_0)}_{y_+}q^{(x_0)}_{y_-} \nonumber \\
  &=& \frac{(\d_{y_+}+\d_{y_-})w_{y_+}w_{y_-}}{S(\bw)} \frac{\d_{y_+}\d_{y_-}}{\left(\d_{y_+}+\d_{y_-}\right)^2}
  = \frac{w_{y_+}w_{y_-}}{S(\bw)} \frac{\d_{y_+}\d_{y_-}}{\d_{y_+}+\d_{y_-}}.
 \end{eqnarray}
For  $y_0 \in \X_0$, the diagonal element of the update matrix corresponding to $y_0$ is
 \begin{equation}\label{eq:D3}
  \left(\mathbf{D}(\bw)\right)_{y_0y_0} = \sum_{x_+ \in \X_+}\sum_{x_- \in \X_-} \tw_{x_+x_-}(\bw)(q^{(x_+,x_-)}_{y_0})^2+\sum_{x_0 \in \X_0} \tw_{x_0}(\bw)(q^{(x_0)}_{y_0})^2  = w_{y_0}.
 \end{equation}
 It can be easily checked that all other elements of the update matrix are equal to zero. Equalities \eqref{eq:D1}-\eqref{eq:D3} yield the following form of the barycentric updating rule for $\bw \in \bbQ^n_{++}$:
\begin{equation}\label{eq:bary}
 \bT^B(\bw)=\bw \odot \bd^{\pi}(\bw),
\end{equation}
where $\odot$ is the componentwise multiplication and the components of $\bd^{\pi}(\bw)$ are
\begin{eqnarray}
  d^{\pi}_{x_+}(\bw)&=&\frac{\sum_{x_- \in \X_-} w_{x_-} \d_{x_-} \tilde{d}_{x_+ x_-}(\bw)}{mS(\bw)} ; \: x_+ \in \X_+, \label{eq:baryplus}\\
  d^{\pi}_{x_-}(\bw)&=&\frac{\sum_{x_+ \in \X_+} w_{x_+} \d_{x_+} \tilde{d}_{x_+ x_-}(\bw)}{mS(\bw)} ; \: x_- \in \X_-, \label{eq:baryminus}\\
  d^{\pi}_{x_0}(\bw)&=& \frac{d_{x_0}(\bw)}{m}; \: x_0 \in \X_0. \label{eq:baryzero}
\end{eqnarray}
Note that the barycentric transformation uses the numbers $\tilde{d}_{x_+ x_-}(\bw)$, $x_+ \in \X_+, x_- \in \X_-$ and $d_{x_0}(\bw)$, $x_0 \in \X_0$, which can be directly re-used for computing the lower bound on the design efficiency and for the deletion method given in Theorem \ref{thm:effdel}.

Let $\bw^{(0)} \in \bbQ^n_{++}$ be an initial design. Let $\bw^{(t+1)}=\bT^B(\bw^{(t)})$ for $t=0,1,2,...$. Note that $\bM(\bw^{(0)})$ is non-singular. We know from the general theory in \cite{Harman14} that $\{\det(\bM(\bw^{(t)}))\}_{t=0}^\infty$ forms a non-decreasing sequence, i.e., all matrices $\bM(\bw^{(t)})$ are non-singular.  For all $t=0,1,2,...$ and all $x \in \X$ we have $d_x(\bw^{(t)})=\bf^{\top}(x)\bM^{-1}(\bw^{(t)})\bf(x)>0$, which follows from positive definitness of $\bM(\bw^{(t)})$ and from the assumption $\bf(x) \neq 0$ for all $x \in \X$. Hence, the formula for $\bT^B$ implies that all components of all designs $\bw^{(t)}$ are strictly positive.

The general theory in \cite{Harman14} guarantees that the sequence $\{\bM(\bw^{(t)})\}_{t=0}^\infty$ converges to some non-singular matrix $\bM^\infty$, but it does not guarantee that $\bM^\infty$ is the optimal information matrix, i.e., the information matrix of a solution $\bw^*$ of problem \eqref{eq:Dopteq}. However,  it is possible to show that under a mild technical condition the designs $\bw^{(t)}$ converge to the optimum in the sense that their criterial values $\phi(\bw^{(t)})$ converge to the optimal value of \eqref{eq:Dopteq}:

\begin{Theorem}\label{thm:convergence}
 Let $\bw^{(0)} \in \bbQ^n_{++}$ and let $\bw^{(t+1)}=\bT^B(\bw^{(t)})$ for $t=0,1,2,...$. Let $\liminf_{t \to \infty} S(\bw^{(t)})>0$. Then, $\lim_{t \to \infty} \phi(\bw^{(t)})= \phi(\bw^*)$, where $\bw^*$ is any solution of\eqref{eq:Dopteq}.
\end{Theorem}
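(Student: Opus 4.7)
The plan is to show that any subsequential limit $\bar{\bw}$ of $\{\bw^{(t)}\}$, extracted along indices where $S$ stays bounded below, satisfies the necessary and sufficient $D$-optimality conditions of Theorem \ref{thm:equivalence}; since $\phi$ depends only on the information matrix and $\bM(\bw^{(t)})$ already converges by \cite{Harman14} to a non-singular matrix $\bM^\infty = \bM(\bar{\bw})$, this would give $\phi(\bw^{(t)}) \to \phi(\bar{\bw}) = \phi(\bw^*)$. Compactness of $\bbQ^n_+$ and the assumption $\liminf_t S(\bw^{(t)}) > 0$ allow one to choose a subsequence with $\bw^{(t_k)} \to \bar{\bw}$ and $S(\bw^{(t_k)}) \geq \sigma$ for some $\sigma > 0$, so that $S(\bar{\bw}) \geq \sigma > 0$ and, because $\bM(\bar{\bw}) = \bM^\infty$ is non-singular, $\bar{\bw} \in \bbQ^n_r$.

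The explicit formulas \eqref{eq:bary}--\eqref{eq:baryzero} show that $\bT^B$ is continuous on $\{\bw \in \bbQ^n_r : S(\bw) > 0\}$, a set containing $\bar{\bw}$. Therefore $\bw^{(t_k+1)} = \bT^B(\bw^{(t_k)}) \to \bT^B(\bar{\bw})$, which combined with $\bM(\bw^{(t_k+1)}) \to \bM^\infty$ yields the matrix identity $\bM(\bT^B(\bar{\bw})) = \bM(\bar{\bw})$, i.e., $\sum_x \bar{w}_x (d^\pi_x(\bar{\bw}) - 1)\bf(x)\bf^{\top}(x) = \0$. To upgrade this to the pointwise support equalities $d^\pi_x(\bar{\bw}) = 1$ for every $x$ with $\bar{w}_x > 0$, I would invoke the Jensen/AM-GM descent inequality underlying the monotonicity of the barycentric algorithm in \cite{Harman14}: there is a continuous non-negative function $G$ on our region, vanishing precisely where $d^\pi_x(\bw) = 1$ for every $x$ with $w_x > 0$, such that $\log\det\bM(\bw^{(t+1)}) - \log\det\bM(\bw^{(t)}) \geq G(\bw^{(t)})$. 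Telescoping together with boundedness of $\log\det\bM(\bw^{(t)})$ forces $G(\bw^{(t_k)}) \to 0$, and continuity of $G$ at $\bar{\bw}$ yields the desired support equalities.

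Using the identity $(\d_{x_+}+\d_{x_-})\tilde{d}_{x_+x_-}(\bar{\bw})=\d_{x_+}d_{x_-}(\bar{\bw})+\d_{x_-}d_{x_+}(\bar{\bw})$, the conditions $d^\pi_{x_+}(\bar{\bw})=d^\pi_{x_-}(\bar{\bw})=1$ on the supports in $\X_+$ and $\X_-$, together with $d^\pi_{x_0}(\bar{\bw})=d_{x_0}(\bar{\bw})/m=1$ on the support in $\X_0$, match the Lagrange conditions for \eqref{eq:Dopteq}: they determine a common multiplier $\mu$ with $(d_{x_+}(\bar{\bw})-m)/\d_{x_+}=\mu$ on the $\X_+$-support and $(d_{x_-}(\bar{\bw})-m)/\d_{x_-}=-\mu$ on the $\X_-$-support, so that the two maxima appearing in Theorem \ref{thm:equivalence} sum to zero at those points. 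The remaining off-support inequalities required by Theorem \ref{thm:equivalence} would follow from the standard dynamical argument for multiplicative updates: if some off-support $d_x(\bar{\bw})$ strictly exceeded its Lagrangian threshold, then $d^\pi_x(\bw)$ would exceed $1$ on a neighbourhood of $\bar{\bw}$, and the multiplicative update $w_x \mapsto w_x d^\pi_x(\bw)$ would keep $\bw^{(t)}_x$ bounded away from zero, contradicting $\bar{w}_x = 0$.

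The main obstacle will be this final off-support argument. The matrix-level convergence from \cite{Harman14} combined with the on-support equalities only pins down the optimal Lagrange multiplier, and converting this to the \emph{global} inequality of Theorem \ref{thm:equivalence} must exploit the dynamical coupling between the specific barycentric formulas \eqref{eq:baryplus}--\eqref{eq:baryminus} and the hypothesis $\liminf_t S(\bw^{(t)}) > 0$, which is precisely the assumption that prevents the cross-type coupling in $d^\pi_{x_+}$, $d^\pi_{x_-}$ from degenerating as the algorithm approaches the boundary of $\bbQ^n_+$.
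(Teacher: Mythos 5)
Your overall strategy (pass to a subsequential limit, use convergence of the information matrices, and verify the conditions of Theorem \ref{thm:equivalence} at the limit) shares its starting point with the paper, but the argument has two genuine gaps exactly where the paper does the real work. First, the step from the support fixed-point equalities $d^{\pi}_x(\bar{\bw})=1$ (for $\bar{w}_x>0$) to a single Lagrange multiplier $\mu$ with $(d_{x_+}(\bar{\bw})-m)/\d_{x_+}=\mu$ and $(d_{x_-}(\bar{\bw})-m)/\d_{x_-}=-\mu$ on the respective supports is not justified: by \eqref{eq:baryplus}--\eqref{eq:baryminus}, $d^{\pi}_{x_+}(\bar{\bw})=1$ is a \emph{weighted-average} condition, $\sum_{x_-}\bar{w}_{x_-}\d_{x_-}\bigl(\tilde{d}_{x_+x_-}(\bar{\bw})-m\bigr)=0$, which can hold while $(d_{x_+}(\bar{\bw})-m)/\d_{x_+}$ varies over the support; this is precisely why fixed points of multiplicative maps are not automatically optimal, a point the paper explicitly warns about before stating the theorem. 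Second, and more seriously, the off-support step --- which you yourself flag as the main obstacle --- is not carried out, and the sketched ``standard dynamical argument'' fails here for two reasons: (a) only the matrices $\bM(\bw^{(t)})$ are known to converge, not the weight vectors, so the iterates need not remain near $\bar{\bw}$ and you cannot conclude $d^{\pi}_x(\bw^{(t)})>1$ along the whole tail; (b) for $x\in\X_+\cup\X_-$ the update factor $d^{\pi}_x(\bw)$ is not $d_x(\bw)/m$ but a weight-dependent average of the cross terms $\tilde{d}_{x_+x_-}(\bw)$, so an off-support point whose $d_x(\bar{\bw})$ exceeds its KKT threshold can still have $d^{\pi}_x\leq 1$ when the opposite-class weights concentrate on points with very negative $(d-m)/\d$.

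The paper avoids both problems by arguing directly by contradiction from Theorem \ref{thm:equivalence}: if the limit point $\bw^{(\infty)}$ were not optimal, then either some $x_0^*\in\X_0$ has $d_{x_0^*}(\bw^{(\infty)})>m$ --- and since $d^{\pi}_{x_0}=d_{x_0}/m$ depends only on the convergent matrices, $w^{(t)}_{x_0^*}$ would grow geometrically, which is impossible --- or the maximizing pair $(x_+^*,x_-^*)$ satisfies $(d_{x_+^*}-m)/\d_{x_+^*}+(d_{x_-^*}-m)/\d_{x_-^*}>0$ in the limit; in that case the paper bounds the \emph{product} $d^{\pi}_{x_+^*}(\bw^{(t)})\,d^{\pi}_{x_-^*}(\bw^{(t)})\geq 1+\gamma\epsilon/(8m)$ for all large $t$, using the identity $\sum_x w^{(t)}_x d_x(\bw^{(t)})=m$ and the hypothesis $\liminf_{t\to\infty} S(\bw^{(t)})>0$ to control the $\X_0$ contribution, so that $w^{(t)}_{x_+^*}w^{(t)}_{x_-^*}$ blows up --- again a contradiction. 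This pairing of $x_+^*$ with $x_-^*$ is exactly the device that compensates for the averaging inside $d^{\pi}$, and it is the ingredient missing from your off-support argument; to complete your proof you would need it (or an equivalent), and the multiplier-identification step should either be proved or bypassed as the paper does.
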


Technical condition $\liminf_{t \to \infty} S(\bw^{(t)})>0$ is automatically satisfied once $\X_0=\emptyset$. The case $\X_0 \neq \emptyset$ takes place only if $c_x$ is exactly equal to one for some $x \in \X$, which is likely to occur very infrequently in applications. Moreover, even if this is the case, i.e., if $\X_0 \neq \emptyset$, it is reasonable to adopt a conservative approach by slightly increasing the costs $c_{x_0}$, $x_0 \in \X_0$. Alternatively, one can use the following lemma. 
\begin{Lemma}\label{lem:guarantee}
 Let $\bw^{(0)} \in \bbQ^n_{++}$ and let $\bw^{(t+1)}=\bT^B(\bw^{(t)})$ for $t=0,1,2,...$.  Let $v_0 := \max\{\phi(\bw): \bw \geq \0_n, \sum_{x_0 \in \X_0} w_{x_0}=1\}$, that is, $v_0$ is the optimal value of the standard problem of approximate $D$-optimality on $\X_0$. Assume that $\phi(\bw^{(s)})>v_0$ for some $s \in \{0,1,2,...\}$. Then, $\liminf_{t \to \infty} S(\bw^{(t)})>0$.
\end{Lemma}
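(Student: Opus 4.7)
The plan is to argue by contradiction: if $S(\bw^{(t)}) \to 0$ along a subsequence, then along a further subsequence the designs converge to a limit supported entirely on $\X_0$, whose criterion value cannot exceed $v_0$; on the other hand, monotonicity of $\phi$ along the barycentric iterates forces the limiting criterion value to be at least $\phi(\bw^{(s)}) > v_0$, contradicting continuity of $\phi$.

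Concretely, I would first record the monotonicity statement that is already in force in the paragraph preceding the lemma: the general theory from \cite{Harman14} guarantees $\det(\bM(\bw^{(t+1)})) \geq \det(\bM(\bw^{(t)}))$, so the sequence $\phi(\bw^{(t)}) = \det^{1/m}(\bM(\bw^{(t)}))$ is non-decreasing. In particular, for every $t \geq s$ one has $\phi(\bw^{(t)}) \geq \phi(\bw^{(s)}) > v_0$.

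Next, suppose for contradiction that $\liminf_{t \to \infty} S(\bw^{(t)}) = 0$. Then there is a subsequence $\{\bw^{(t_k)}\}_{k}$ with $S(\bw^{(t_k)}) \to 0$. Since $\bbQ^n_+$ is compact, I may pass to a further subsequence (still indexed by $k$) so that $\bw^{(t_k)} \to \bar\bw$ for some $\bar\bw \in \bbQ^n_+$. The map $\bw \mapsto S(\bw) = \sum_{x_+ \in \X_+} \delta_{x_+} w_{x_+}$ is continuous, so $S(\bar\bw) = 0$; because each $\delta_{x_+} > 0$ and each $\delta_{x_-} > 0$, this forces $\bar w_x = 0$ for every $x \in \X_+ \cup \X_-$. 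Combining this with the size equation $\sum_{x \in \X} \bar w_x = 1$ yields $\sum_{x_0 \in \X_0} \bar w_{x_0} = 1$, so $\bar\bw$ is feasible for the optimization problem defining $v_0$, and hence $\phi(\bar\bw) \leq v_0$. On the other hand, continuity of $\phi$ and the monotonicity noted above give $\phi(\bar\bw) = \lim_{k \to \infty} \phi(\bw^{(t_k)}) \geq \phi(\bw^{(s)}) > v_0$, a contradiction.

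The only real obstacle is making sure the monotonicity of $\phi(\bw^{(t)})$ is legitimately available: it is exactly the ``$\{\det(\bM(\bw^{(t)}))\}$ is non-decreasing'' fact that the paper invokes from \cite{Harman14} in the paragraph immediately before the lemma, so it can be cited without re-proof. Everything else is continuity and compactness, and the definition of $v_0$ is tailored to close the loop in the final inequality chain.
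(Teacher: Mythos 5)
Your proposal is correct and follows essentially the same route as the paper's proof: contradiction via compactness of $\bbQ^n_+$, extraction of a convergent subsequence with $S \to 0$, concluding the limit design is supported on $\X_0$ so its criterion value is at most $v_0$, and then contradicting this with the monotonicity of $\{\phi(\bw^{(t)})\}$ (cited from the general theory) together with continuity of $\phi$. The only cosmetic difference is that the paper deduces $w^{(t_i)}_{x_+}\to 0$ and $w^{(t_i)}_{x_-}\to 0$ directly from $S(\bw^{(t_i)})\to 0$, while you pass through $S(\bar\bw)=0$ first; the argument is otherwise identical.
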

In most cases, the value $v_0$ from Lemma \eqref{lem:guarantee} is so small, that $\phi(\bw^{(s)})>v_0$ is satisfied already for the initial design $\bw^{(0)}$. In such cases, the convergence of the barycentric algorithm is guaranteed from the outset. 

\section{Numerical study}\label{sec:Examples}

Assume the full quadratic linear regression model with homoscedastic uncorrelated observations on a $101 \times 101$ equidistant rectangular grid in the square $[0,1] \times [0,1]$. For this model, the observations $y$ satisfy
\begin{equation}\label{eq:quadmod}
 E(y)=\theta_1+\theta_2r_1(x)+\theta_3r_2(x)+ \theta_4r^2_1(x)+\theta_5r^2_2(x)+ \theta_6r_1(x)r_2(x), 
\end{equation} 
where $r_1(x)$ and $r_2(x)$ transform the index $x \in \X=\{1,2,...,101^2\}$ into two coordinates in $[0,1]$ by formulas $r_1(x)=\lfloor (x-1)/101 \rfloor /100$, and $r_2(x)=((x-1) \: \mathrm{mod} \: 101)/100$. That is, the model has $m=6$ unknown parameters $\theta_1,...,\theta_6$ and the regressors are given by
\begin{equation*}
 \bf(x)=(1,r_1(x),r_2(x),r_1^2(x),r_2^2(x),r_1(x)r_2(x))^{\top}.
\end{equation*}
The costs were chosen to be $c_x=0.1+6r_1(x)+r_2(x)$ for all $x \in \X$. Thus, the sizes of the partitions $\X_+$, $\X_-$, and $\X_0$ are $n_+=9465$, $n_-=720$, and $n_0=16$, respectively. Every $16$ iterations, we used Theorem \ref{thm:effdel}, parts (i)-(iii), to remove redundant design points.

 Figures \ref{fig:quad99}, \ref{fig:quad999}, and \ref{fig:quad9999} illustrate the designs and the areas of deleted design points at the moments when the algorithm reached efficiencies $0.99$, $0.999$ and $0.9999$. Figure \ref{fig:quadsum} shows the time-dependence of the iteration number and the number of non-deleted design points. Note that as the size of the design space shrinks, the speed of the computation (measured by the number of iterations) increases. 

\begin{figure}[h]
  \captionsetup{width=0.9\textwidth}
  \centering
  \begin{subfigure}[b]{0.45\textwidth}
     \includegraphics[width=\textwidth]{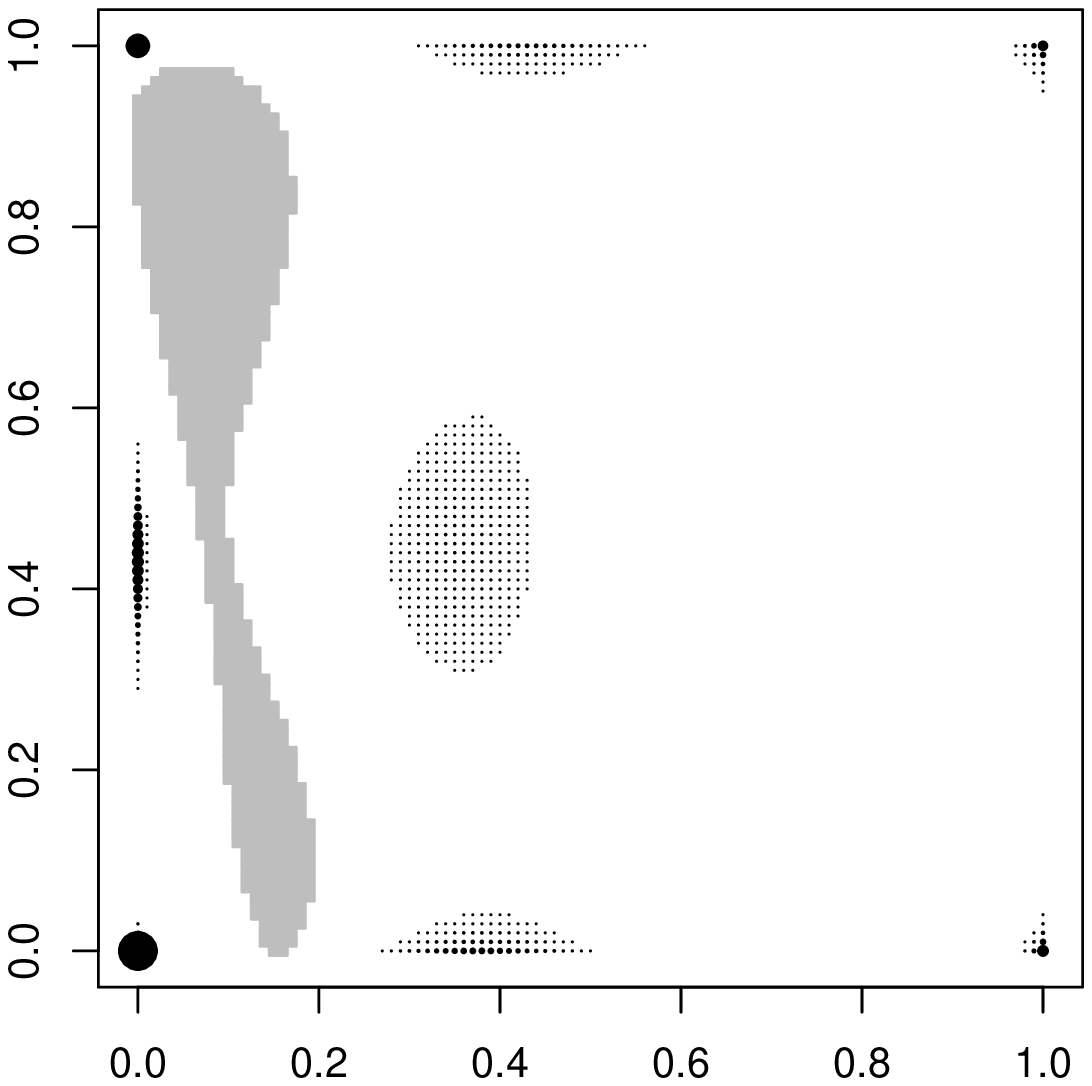}
     \caption{$\mathrm{eff}\geq 0.99$}\label{fig:quad99}
  \end{subfigure}
  \begin{subfigure}[b]{0.45\textwidth}
     \includegraphics[width=\textwidth]{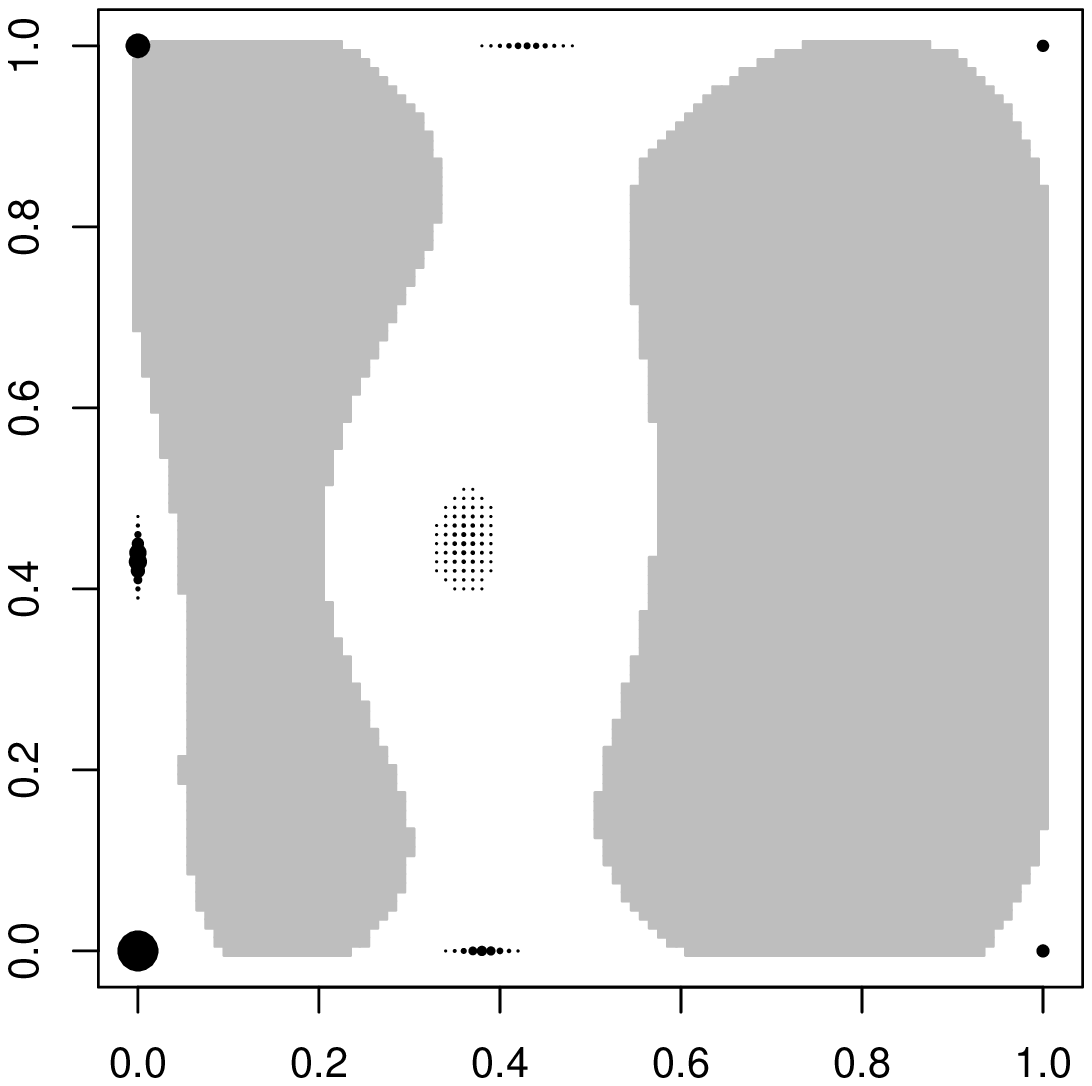}
     \caption{$\mathrm{eff} \geq 0.999$}\label{fig:quad999}
  \end{subfigure}\\
  \begin{subfigure}[b]{0.45\textwidth}
     \includegraphics[width=\textwidth]{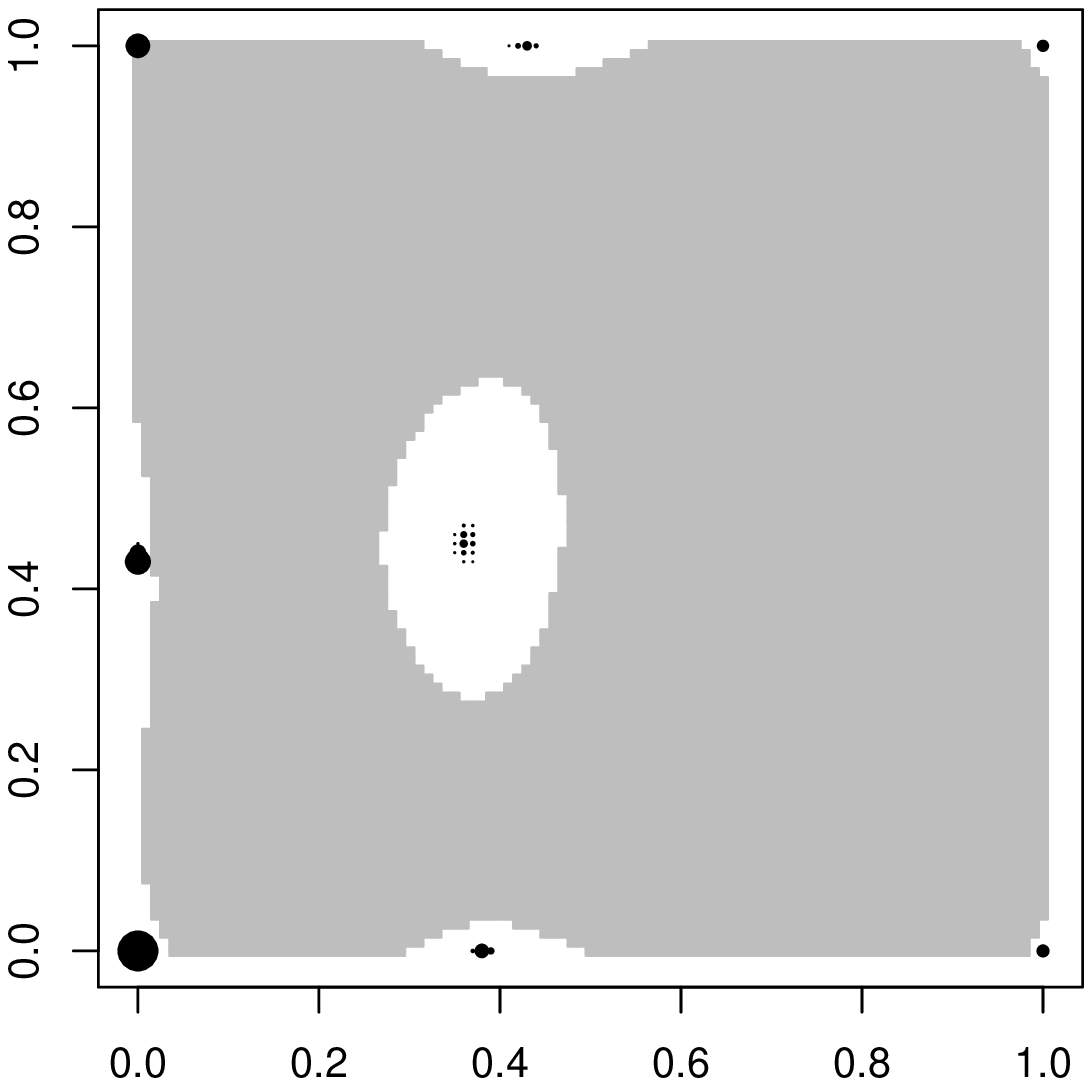}
     \caption{$\mathrm{eff} \geq 0.9999$}\label{fig:quad9999}
       \end{subfigure}
     \begin{subfigure}[b]{0.45\textwidth}
     \includegraphics[width=\textwidth]{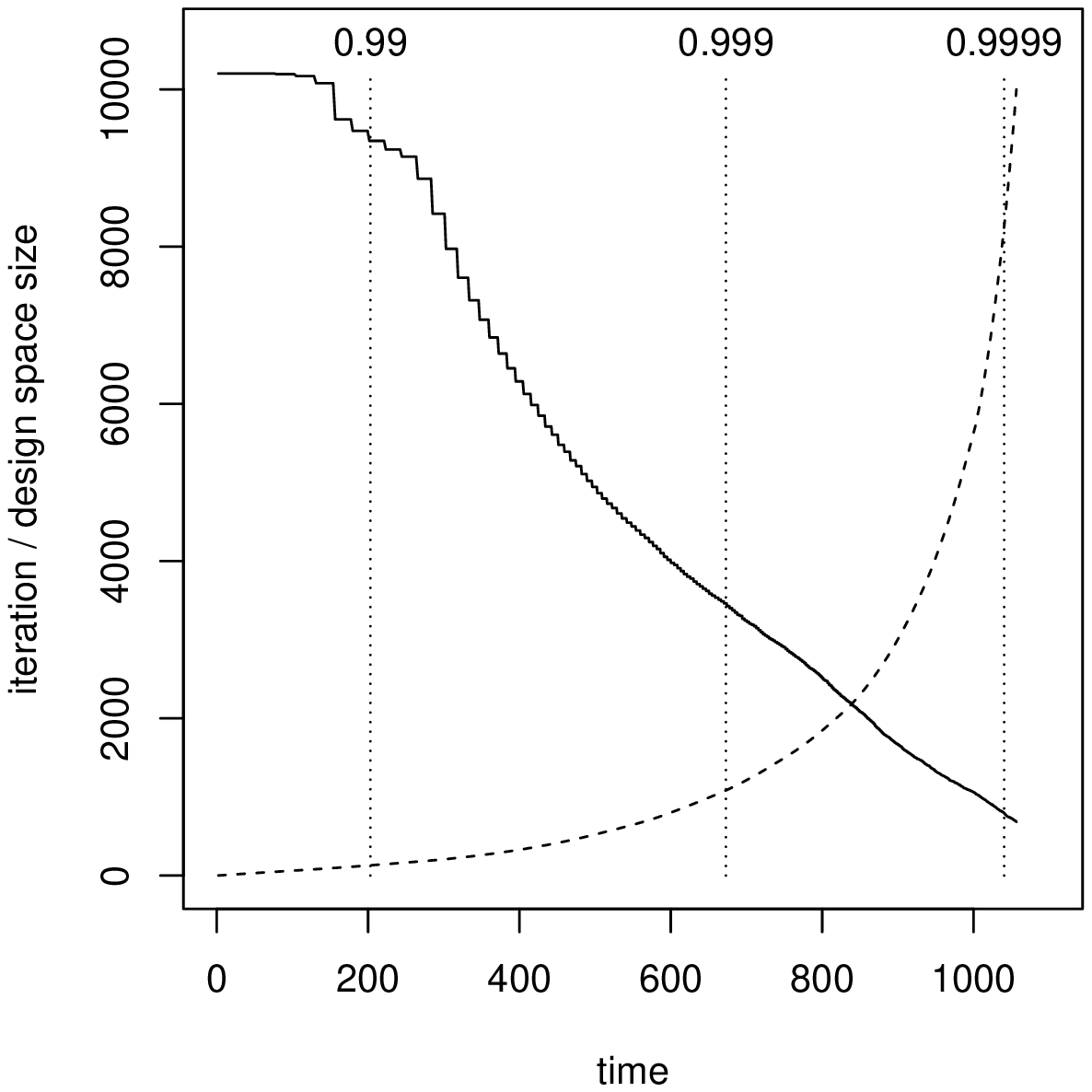}
     \caption{}\label{fig:quadsum}
  \end{subfigure}
  \caption{Figures \ref{fig:quad99}, \ref{fig:quad999}, and \ref{fig:quad9999} visualize the designs constructed using the barycentric algorithm for the quadratic regression model \eqref{eq:quadmod}. The weights are denoted by black dots with areas proportional to their numeric values. The gray regions denote the points of the original design space that have been removed by the deletion rules from Theorem \ref{thm:effdel}. Figure \ref{fig:quadsum} shows the iteration number (dashed line) and the number of residual design points (solid line) as they depend on time in seconds. The vertical lines (dotted) denote the time moments when the efficiencies $0.99$, $0.999$ and $0.9999$ have been achieved, which corresponds to Figures \ref{fig:quad99}, \ref{fig:quad999}, and \ref{fig:quad9999}, respectively.} \label{fig:quad}
\end{figure}

To obtain more general numerical results, we generated random instances of problem \eqref{eq:Dopteq} with the aim to give statistical information about the speed of computation of the barycentric algorithm. Clearly, the execution time can be strongly influenced by the software and the hardware used (we used the Matlab computing environment on 64 bit Windows 7 system running an Intel Core i3-4000M CPU processor at $2,40$ GHz with 4 GB of RAM). Therefore, we also exhibit results about the numbers of iterations, which depend only on the computational method itself.

More specifically, we run the barycentric algorithm $1000$ times for various combinations of parameters $p_0:=n_0/n$, $p_{+-}:=n_+/n_-$ and $l$, where $l$ is the number of iterations between successive applications of the deletion method based on Theorem \ref{thm:effdel}, parts (i)-(iii). In each simulation, we varied one of the parameters $p_0 \in \{0,0.25,0.5,0.75,1\}$, $p_{+-} \in \{0.1,0.3,0.5,0.7,0.9\}$, or $l \in \{1,4,16,64,\infty\}$, keeping all other parameters fixed (the value $l=\infty$ means that the deletion of redundant design points has not been performed at all). The size of the design space and the number of model parameters were always the same: $n=600$ and $m=4$. We did not vary the values $n$ and $m$ because the change of the performance of the algorithm with respect to $n$ and $m$ is analogous to the problem analysed in Section 5 in \cite{Harman14}.

For each triple $p_0,p_{+-},l$, we generated $n_+=\lfloor (1-p_0)p_{+-}n \rfloor$ costs independently from the shifted exponential distribution $\mathrm{Exp}(1)+1$, and $n_-=\lfloor (1-p_0)(1-p_{+-})n \rfloor$ costs independently from the uniform distribution on $(0,1)$. Remaining $n-n_+-n_- \approx np_0$ costs were set to $1$. Regressors $\mathbf{f}(x) \in \bbR^m, x \in \X$, were sampled independently from $\mathrm{N}_m(\mathbf{0}_m,\mathbf{I}_m)$.

The barycentric algorithm started its iterative computation from the initial design $\bw^{(0)}$ defined by \eqref{eq:w0plus}-\eqref{eq:w0zero}. In every step, the current design was updated according to \eqref{eq:bary}-\eqref{eq:baryzero}.

After each successful application of the deletion method, we had to ``re-normalize'' the design to satisfy the constraints of  \eqref{eq:Dopteq}. A natural re-normalization is to set $w_{x_+}=h_+w_{x_+}$ for all $ x_+ \in \X_+$, $w_{x_-}=h_-w_{x_-}$ for all $ x_- \in \X_-$, and $w_{x_0}=h_0w_{x_0}$ for all $ x_0 \in \X_0$, where $h_+$, $h_-$ and $h_0$ are suitably chosen positive constants.  

Let $\bw \geq \0_n$ be a fixed design with $0<s:=\sum_x w_x \leq 1$ and $0<\sum_x c_xw_x \leq 1$. Let $s_+:=\sum_{x_+} w_{x_+}$, $s_-:=\sum_{x_-} w_{x_-}$, and $s_0:=\sum_{x_0} w_{x_0}$. Let $s^\delta_+:=\sum_{x_+} \delta_{x_+} w_{x_+}$, and let $s^\delta_-:=\sum_{x_-} \delta_{x_-} w_{x_-}$. 

Assume that $s_+, s_- > 0$, which implies $s^\delta_+,s^\delta_- > 0$. For the requirement that the re-normalized design should satisfy both \eqref{eq:sizeeq} and \eqref{eq:costeq}, the following linear equalities must hold
\begin{eqnarray}
h_+ s_+ + h_- s_- + h_0 s_0 &=& 1, \label{eq:renorm1}\\
h_+ (s_+ + s^\delta_+) + h_- (s_- - s^\delta_-) + h_0 {s}_0 &=& 1. \label{eq:renorm2}
\end{eqnarray}

If $s_0=0$, then \eqref{eq:renorm1} and \eqref{eq:renorm2} give $h_+=s^\delta_-/(s_+ s^\delta_- + s_- s^\delta_+)$, $h_-=s^\delta_+/(s_+ s^\delta_- + s_- s^\delta_+)$ and $h_0$ can be arbitrary. If $s_0>0$, equalities \eqref{eq:renorm1} and \eqref{eq:renorm2} do not uniquely determine any of the re-normalization factors $h_+,h_-,h_0$. Therefore, motivated by keeping the ratio of the weights of $\X_+ \cup \X_-$ and $\X_0$ the same before and after the re-normalization, we can demand equality
\begin{equation}
\frac{s_+ + s_-}{s_0}=\frac{h_+s_+ + h_-s_-}{h_0s_0}, \label{eq:renorm3}
\end{equation} 
which is linear in $h_+,h_-,h_0$. The solution of the linear system \eqref{eq:renorm1}-\eqref{eq:renorm3} is
\begin{eqnarray*}
 h_+=\frac{s^\delta_-(s_+ + s_-)}{s(s_+ s^\delta_- + s_- s^\delta_+)}, \: h_-=\frac{s^\delta_+(s_+ + s_-)}{s(s_+ s^\delta_- + s_- s^\delta_+)}, \: h_0=\frac{1}{s}.
\end{eqnarray*}
 
 In case $s_+ = s_- =s^\delta_+=s^\delta_-=0$ we must have $s_0>0$, which means that we can simply set $h_0=1/s_0$, and $h_+,h_-$ can be arbitrary. In this case the barycentric algorithm is reduced to the standard multiplicative algorithm without the cost constraint. 
 
The required minimal efficiency was set to $0.99999$, which means that we stopped the algorithm once this lower bound has been reached by the actual design (cf.\ Theorem \ref{thm:effdel}). We remark that the algorithm converged in all $30000$ simulated problems.

The results in the form of boxplots are exhibited in Figures \ref{fig:iter} and \ref{fig:time}. The results indicate that the problem is computationally more demanding for greater values of $\tilde{n}=n_0+n_+n_-$. Thus, with fixed $n=n_0+n_++n_-$, we can generally expect a longer computation time (and, to a lesser extent, a higher number of iterations) for $n_0=0$ and $n_+ \approx n_-$, i.e., for $p_0=0$ and $p_{+-}=0.5$. The numerical results demonstrate that removal of redundant design points can decrease the computation time by an order of magnitude.

\begin{figure}[h]
  \captionsetup{width=0.9\textwidth}
  \centering
  \begin{subfigure}[b]{0.3\textwidth}
     \includegraphics[width=\textwidth]{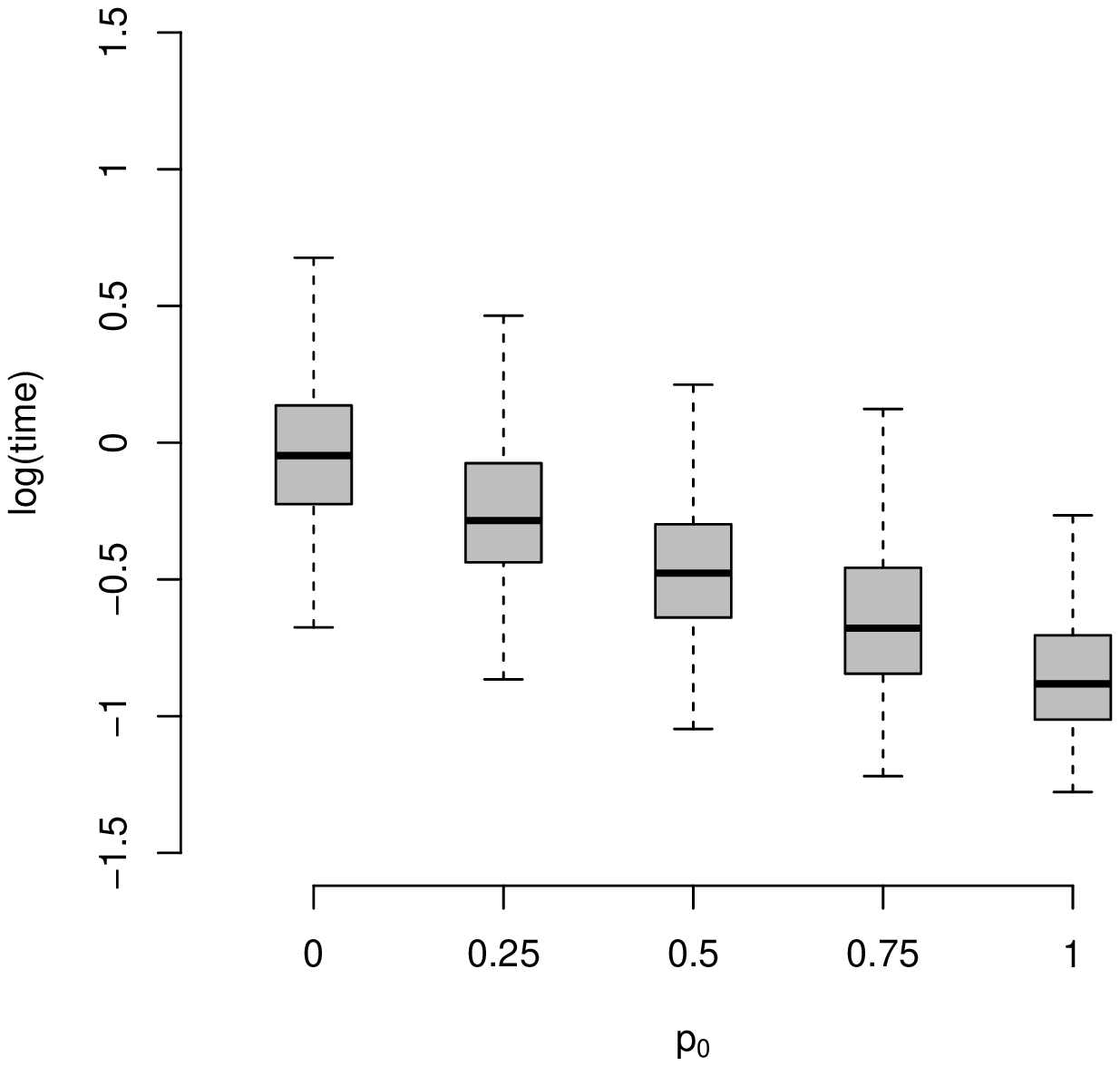}
     \caption{}\label{fig:timep0}
  \end{subfigure}
  \begin{subfigure}[b]{0.3\textwidth}
     \includegraphics[width=\textwidth]{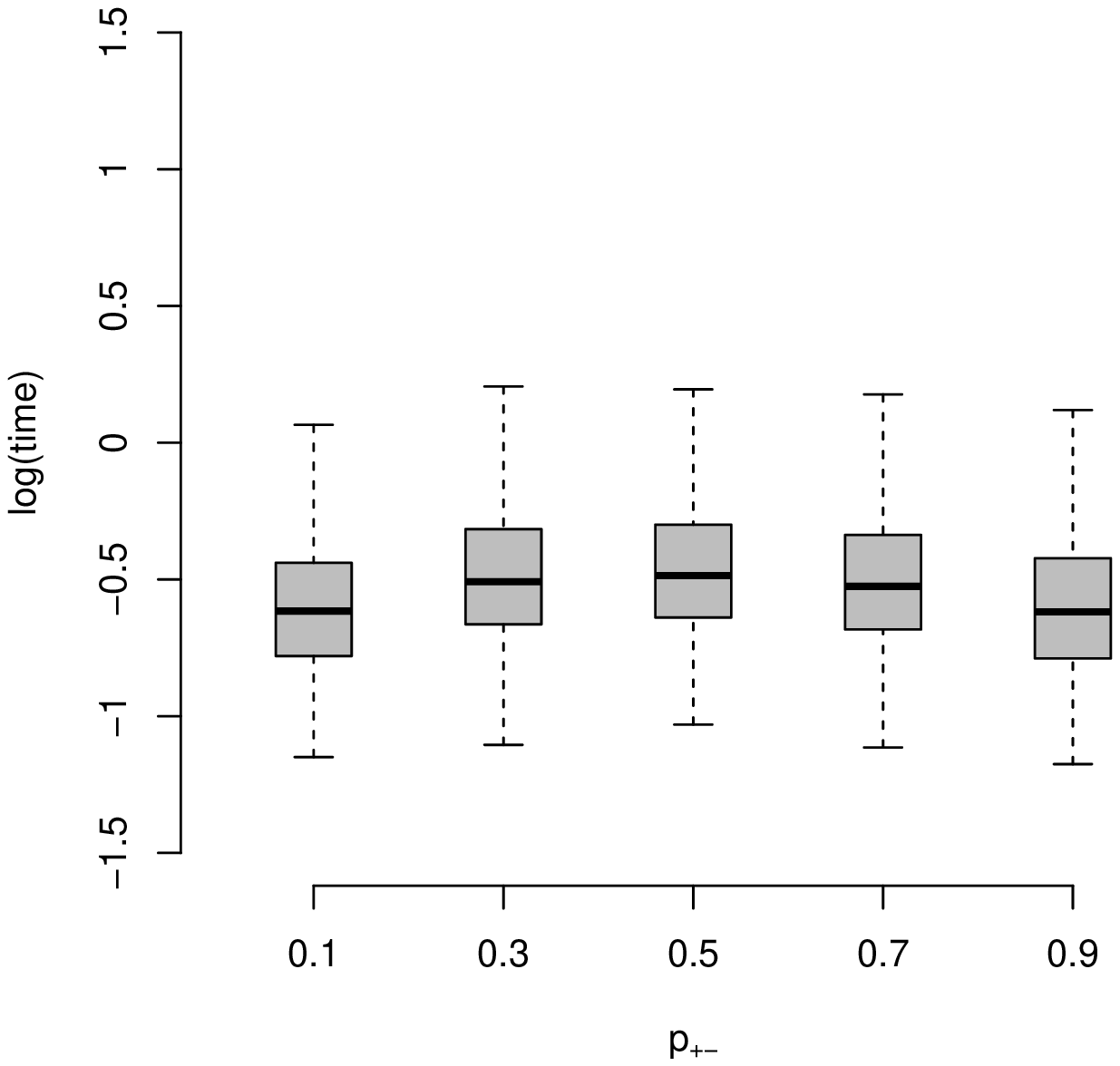}
     \caption{}\label{fig:timep_pl}
  \end{subfigure}
  \begin{subfigure}[b]{0.3\textwidth}
     \includegraphics[width=\textwidth]{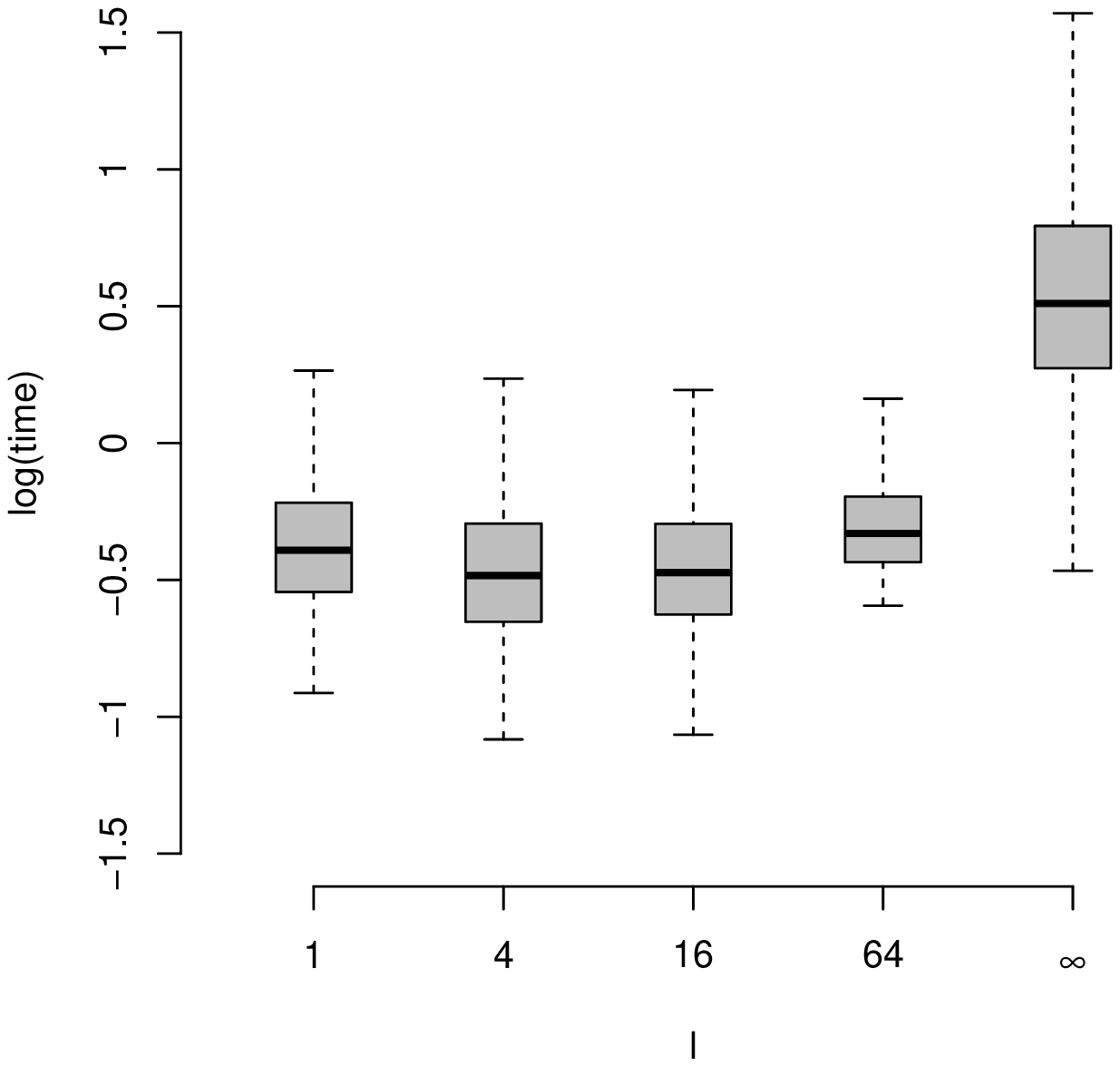}
     \caption{}\label{fig:timel}
  \end{subfigure}
  \caption{The decimal logarithm of the computation time (in seconds) of the barycentric algorithm necessary to achieve the efficiency of $0.99999$. Each boxplot is based on $1000$ randomly generated problems of the type \eqref{eq:Dopteq} with $n=600$ and $m=4$. Figure \ref{fig:timep0}: $p_{+-}=0.5$, $l=16$, and $p_0$ varies in $\{0,0.25,0.5,0.75,1\}$. Figure \ref{fig:timep_pl}: $p_0=0.5$, $l=16$, and $p_{+-}$ varies in $\{0.1,0.3,0.5,0.7,0.9\}$. Figure \ref{fig:timel}: $p_0=0.5$, $p_{+-}=0.5$, $l$ varies in $\{1,4,16,64,\infty\}$. The value $l=\infty$ means that no deletions were performed. See the main text for more details.}\label{fig:time}
\end{figure}

\begin{figure}[h]
  \captionsetup{width=0.9\textwidth}
  \centering
  \begin{subfigure}[b]{0.3\textwidth}
     \includegraphics[width=\textwidth]{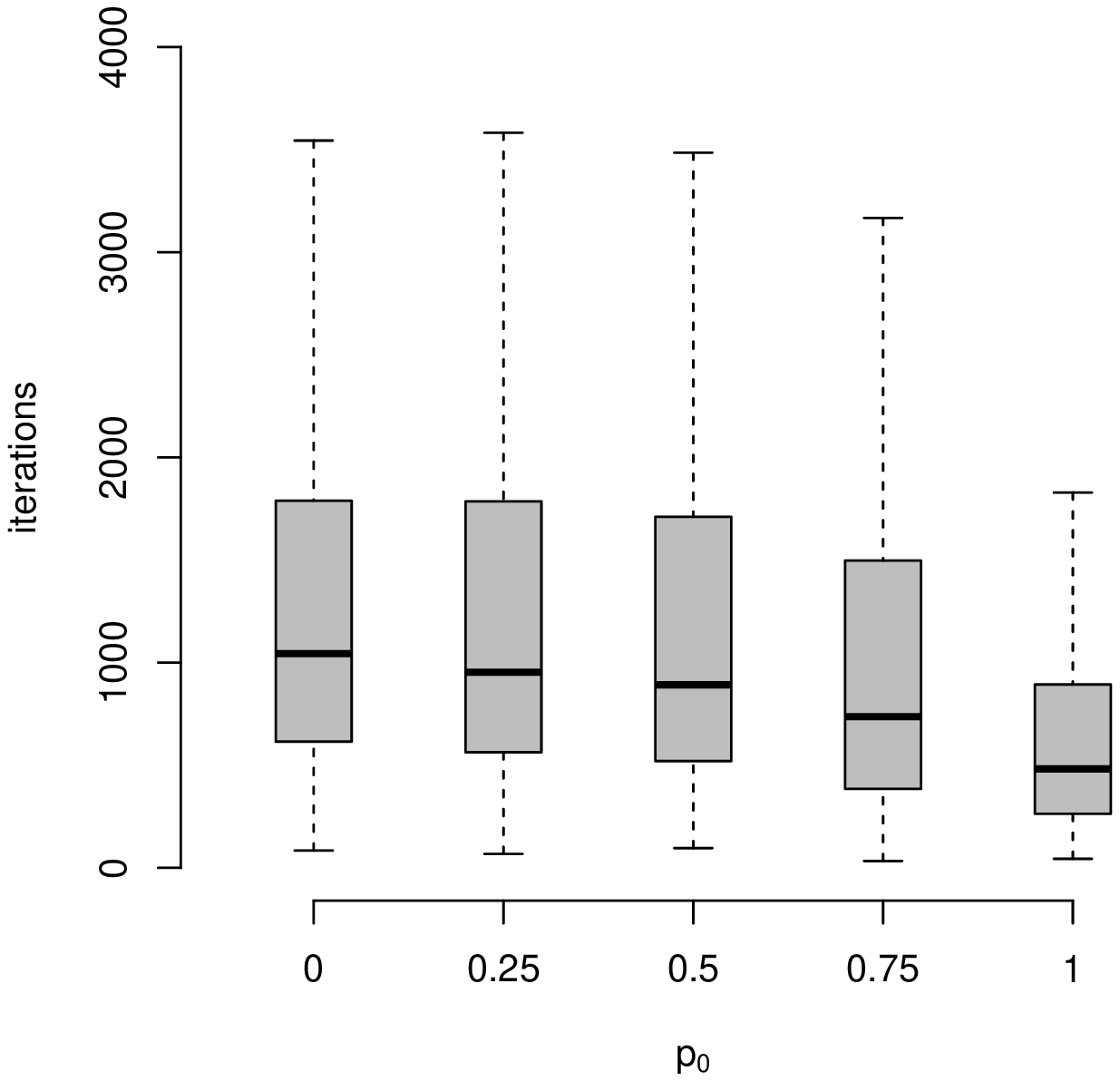}
     \caption{}\label{fig:iterp0}
  \end{subfigure}
  \begin{subfigure}[b]{0.3\textwidth}
     \includegraphics[width=\textwidth]{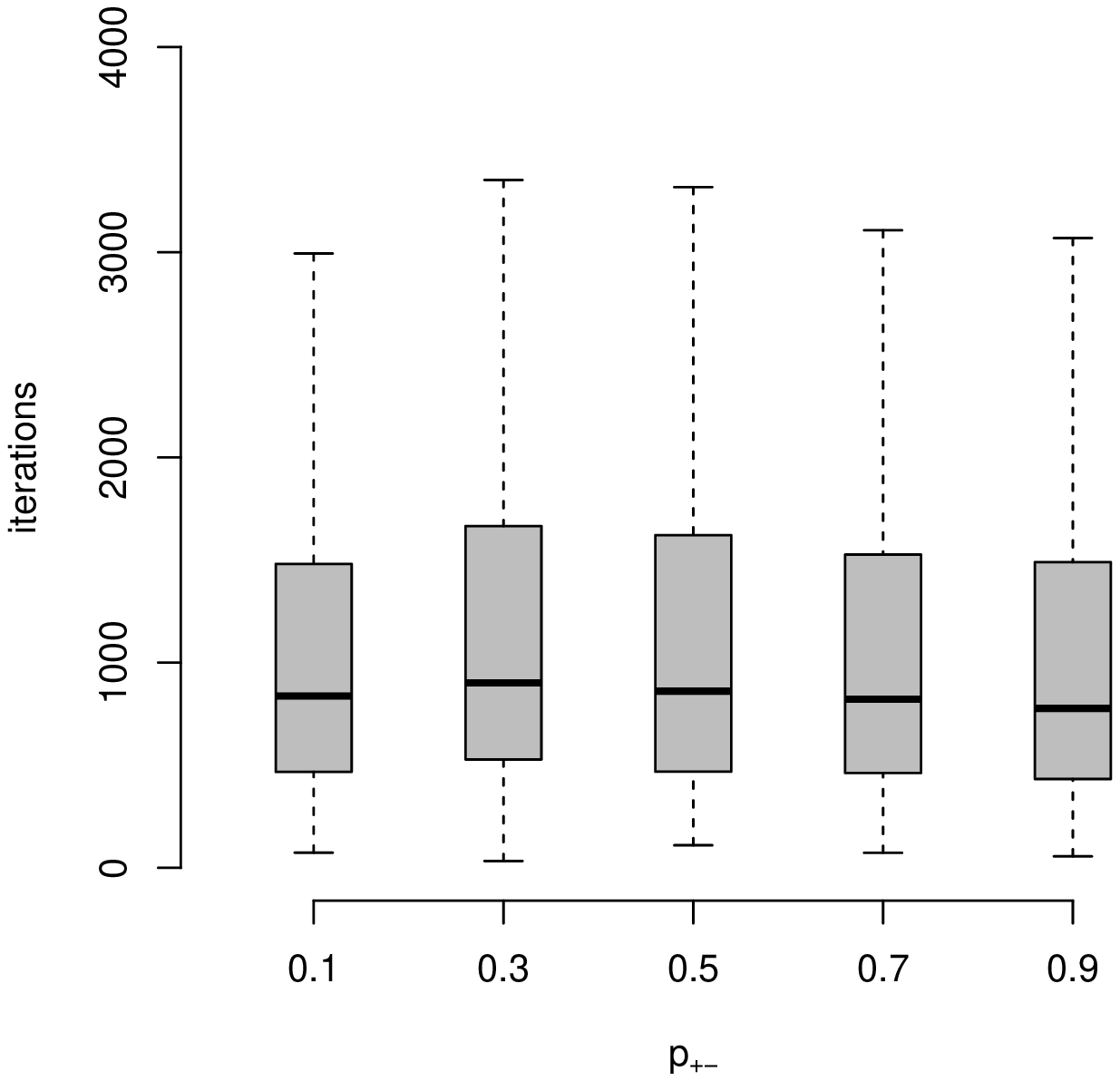}
     \caption{}\label{fig:iterp_pl}
  \end{subfigure}
  \begin{subfigure}[b]{0.3\textwidth}
     \includegraphics[width=\textwidth]{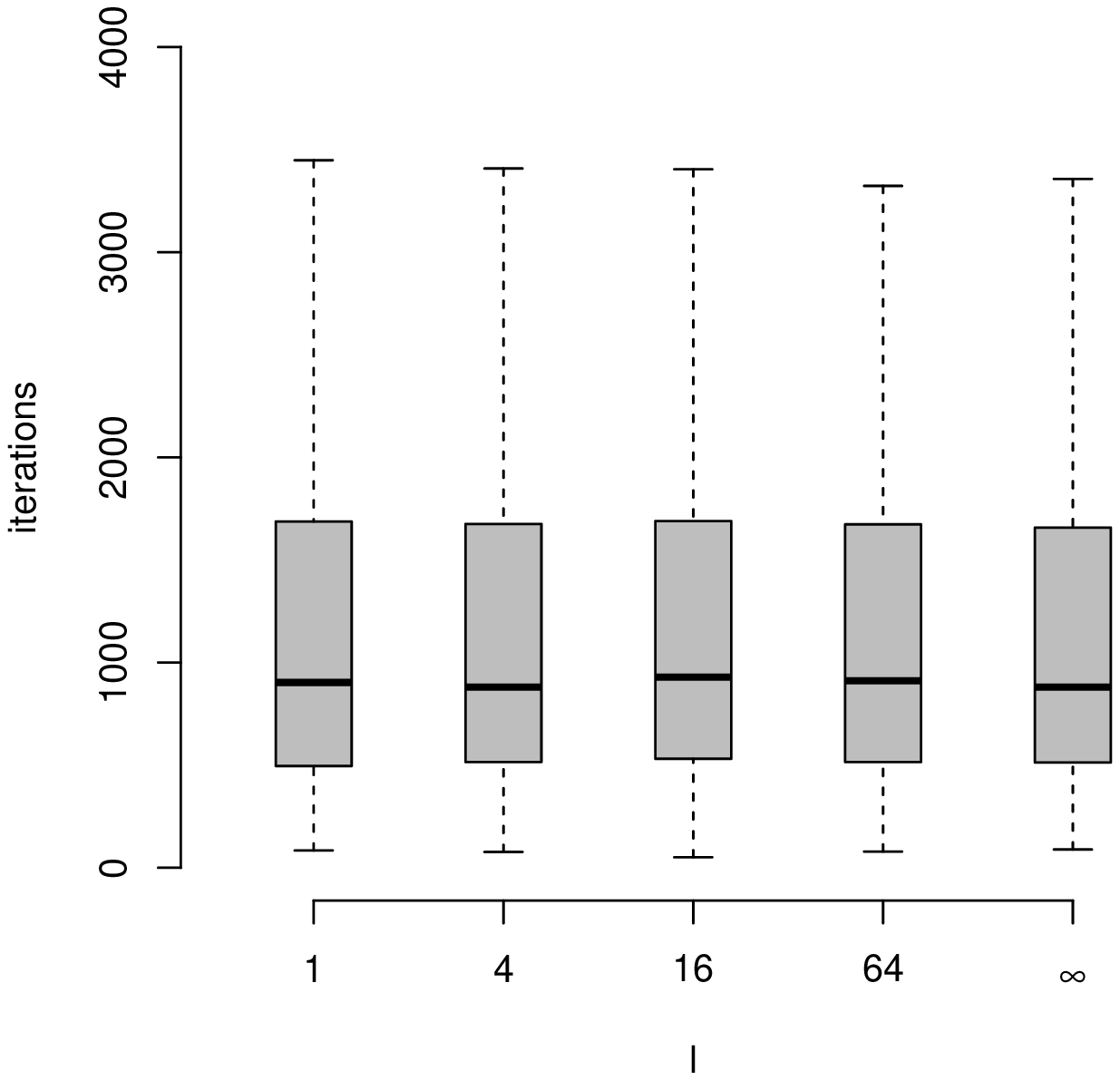}
     \caption{}\label{fig:iterl}
  \end{subfigure}
  \caption{The number of iterations of the barycentric algorithm necessary to achieve the efficiency of $0.99999$. Each boxplot is based on $1000$ randomly generated problems of the type \eqref{eq:Dopteq} with $n=600$ and $m=4$. Figure \ref{fig:iterp0}: $p_{+-}=0.5$, $l=16$, and $p_0$ varies in $\{0,0.25,0.5,0.75,1\}$. Figure \ref{fig:iterp_pl}: $p_0=0.5$, $l=16$, and $p_{+-}$ varies in $\{0.1,0.3,0.5,0.7,0.9\}$. Figure \ref{fig:iterl}: $p_0=0.5$, $p_{+-}=0.5$, $l$ varies in $\{1,4,16,64,\infty\}$. The value $l=\infty$ means that no deletions were performed. See the main text for more details.} \label{fig:iter}
\end{figure}

The Matlab code that implements the barycentric algorithm with the deletion method is available at: \verb!www.iam.fmph.uniba.sk/design/! .

\section{Additional remarks}\label{sec:Notes}

\subsection{A pair of general linear constraints}
Consider the $D$-optimal design problem 
\begin{equation}\label{eq:Doptgen}
\tilde{\bw}^* \in \mathrm{argmax}\left\{\phi(\tilde{\bw}): \tilde{\bw} \geq \0_n, \sum_x c^{(1)}_x\tilde{w}_x \leq 1, \sum_x c^{(2)}_x\tilde{w}_x \leq 1 \right\},
\end{equation}
where $c^{(1)}_x, c^{(2)}_x>0$, for all $x \in \X$, are given constants. Let $\tilde{\bf}(x)$, $x \in \X$, denote the regressors. It turns out that problem \eqref{eq:Doptgen} can be transformed to \eqref{eq:Dopt} in an analogous way as the single-cost constrained problem \eqref{eq:Doptcost} can be transformed to the standard problem \eqref{eq:Doptsize}. Specifically, it is possible to use transformations \begin{equation*}
 w_x=c^{(1)}_x\tilde{w}_x, \: c_x=\frac{c^{(2)}_x}{c^{(1)}_x}, \: \bf(x)=\frac{\tilde{\bf}(x)}{\sqrt{c^{(1)}_x}}; \: x \in \X.
\end{equation*} 
 Similarly, problem \eqref{eq:Doptgen} but with the inequality constraints replaced with equality constraints can be transformed to \eqref{eq:Dopteq}.

\subsection{Alternative application of the cost constraint}
Equality \eqref{eq:costeq} can be used to guarantee a fixed value of $\tilde{\phi}(\bw)=\mathrm{tr}(\Sigma \bM(\bw))$, where $\Sigma$ is a known positive definite $m \times m$ matrix. Indeed, $\tilde{\phi}(\bw)=v>0$ is equivalent to \eqref{eq:costeq} with $c_x=v^{-1}\bf^{\top}(x)\Sigma\bf(x)>0$. Criteria of this form include the Kiefer's $\phi_1$ criterion (Section 6.5 in \cite{Pukelsheim06}) as well as a criterion that can be used for model testing (equation (13) in \cite{FK86}; cf.\ Example 3 in \cite{CF95}).

\subsection{Relations to stratified $D$-optimality}
Let $c_{x_+} \equiv c_+ >1$ for all $x_+ \in \X_+$, $c_{x_-} \equiv c_- \in (0,1)$ for all $x_- \in \X_-$ and let $\X_0 = \emptyset$. Define $s_+(\bw):=\sum_{x_+} w_{x_+}$ and $s_-(\bw):=\sum_{x_-} w_{x_-}$. If $\bw$ is feasible for \eqref{eq:Dopteq}, then
$s_+(\bw) + s_-(\bw)=1$ and $c_+ s_+(\bw) + c_- s_-(\bw)=1$. Hence the values
\begin{equation*}
  s_+:=s_+(\bw)=\frac{1-c_-}{c_+-c_-}, \: s_-:=s_-(\bw)=\frac{c_+-1}{c_+-c_-}
\end{equation*}  
 do not depend on $\bw$. Therefore, for this very specific size-and-cost constrained case, the set of designs feasible for \eqref{eq:Dopteq} is the same as the set of stratified designs with partitions $\X_+$, $\X_-$ and weights $s_+, s_-$, see \cite{Harman14}. That is, in this situation the size-and-cost constrained $D$-optimality coincides with the stratified $D$-optimality, including the equivalence theorem, the deletion rules, and also the barycentric algorithm.

\subsection{Matrix form of the barycentric algorithm}
For computations, it may be useful to rewrite the  barycentric transformations \eqref{eq:baryplus}-\eqref{eq:baryzero} to the following form. For vectors $\mathbf{a} \in \bbR^{n_+}$ and $\mathbf{b} \in \bbR^{n_-}$ let $\mathbf{a} \oplus \mathbf{b}^{\top}$ be the $n_+ \times n_-$ matrix with components $a_{x_+} + b_{x_-}$. This operation can be implemented as a stand-alone function or using the Kronecker multiplication. For any vector $\mathbf{g} \in \bbR^n$, let $\mathbf{g}_+ \in \bbR^{n_+}$, $\mathbf{g}_-  \in \bbR^{n_-}$, and $\mathbf{g}_0  \in \bbR^{n_0}$ denote the sub-vectors of $\mathbf{g}$ corresponding to $x_+ \in \X_+$, $x_- \in \X_-$, and $x_0 \in \X_0$, respectively. Let $\bw \in  \bbQ^n_+$ be a feasible design, let $\delta:=(\delta_1,...,\delta_n)^\top$ and $\1_k:=(1,...,1)^\top \in \bbR^k$ for any $k \in \bbN$. The $n_+ \times n_-$ matrix $\Delta$ with components defined in \eqref{eq:Delta} is 
\begin{equation*}
\Delta=\left[(\bd_+(\bw) \oslash \delta_+) \oplus (\bd_-(\bw) \oslash \delta_-)^\top \right] \oslash \left[(\mathbf{1}_{n_+} \oslash \delta_+) \oplus (\mathbf{1}_{n_-} \oslash \delta_-)^\top \right],
\end{equation*}
where $\oslash$ denotes the componentwise division. Then, the barycentric transformations \eqref{eq:baryplus}-\eqref{eq:baryzero} can be written in the form 
\begin{eqnarray}
\bT^B_+(\bw) &=& \frac{1}{m S(\bw)} \bw_+ \odot [\Delta (\bw_- \odot \delta_-)], 
\label{eq:matrix1}\\
\bT^B_-(\bw) &=& \frac{1}{m S(\bw)} \bw_- \odot [\Delta^\top (\bw_+ \odot \delta_+)], \label{eq:matrix2}\\
\bT^B_0(\bw) &=& \frac{1}{m} \bw_0 \odot \bd_0(\bw), \label{eq:matrix3}
\end{eqnarray}
where $\odot$ denotes the componentwise multiplication. In matrix-based software such as Matlab or R, computations \eqref{eq:matrix1}-\eqref{eq:matrix3} can be performed very efficiently. 

\subsection{Increasing the speed of computations}
For computing $D$-optimal stratified designs, there exists a rapid re-normalization heuristic, and extensive numerical computations suggest that it always converges to the optimum (see \cite{Harman14}). However, to our best knowledge, there is no analogous re-normalization heuristic for the general size-and-cost constrained problem \eqref{eq:Dopteq}. For instance, an obvious suggestion would be using an alternate application of the standard multiplicative algorithm (which could transform a design from $\bbQ^n_{++}$ outside of $\bbQ^n_{++}$), and the re-normalization described in Section \ref{sec:Examples} (which transforms any positive design back to $\bbQ^n_{++}$). Numerical experiments suggest that this method does not produce a convergent sequence of designs.

It is likely that the numerically most efficient method for solving  \eqref{eq:Dopteq} would combine the ideas of several methods. A simple practical approach is to use the barycentric algorithm with the deletion method in the initial part of the computation, which can significantly reduce the size of the design space, and then apply the SDP or the SOCP methods. Alternatively, one could try to combine the barycentric and vertex direction methods, similarly to \cite{Yu11}.

\subsection{Other criteria than $D$-optimality} Most considerations in the introduction apply also to other criteria than $D$-optimality, for instance to $A$-optimality. However, a barycentric algorithm for $A$-optimality has not yet been studied. It is probable that such an algorithm could be developed using methods analogous to \cite{Harman14} and that a generalization of the recent deletion method \cite{Pronzato13} could be used for the removal of the redundant design points. 

\section{Appendix}\label{sec:Appendix}

\begin{proof}[Proof of Proposition \ref{lem:baryweights}]
Let $x_+ \in \X_+$, $x_- \in \X_-$, $x_0 \in \X_0$ be fixed. Obviously, functions $\tw_{x_+x_-}$ and $\tw_{x_0}$ are non-negative on $\bbQ^n_+$ and positive on $\bbQ^n_{++}$ (note that $S(\bw)>0$ for all $\bw \in \bbQ^n_{++}$). The continuity of $\tw_{x_0}$ on $\bbQ^n_+$ is trivial. We will prove the continuity of $\tw_{x_+x_-}$ on $\bbQ^n_+$. 

For any $\bw \in \bbQ^n_+$ we have $w_{x_+} \leq \d^{-1}_{x_+}S(\bw)$, $w_{x_-} \leq \d^{-1}_{x_-}S(\bw)$ and \eqref{eq:baryweights_pm}, \eqref{eq:baryweights_0} yield the upper bound $\tw_{x_+ x_-}(\bw) \leq (\d^{-1}_{x_+}+\d^{-1}_{x_-})S(\bw)$. The only point of discontinuity of $\tw_{x_+ x_-}(\bw)$ could be $\bw^a \in \bbQ^n_+$ such that $S(\bw^a)=0$. But if some sequence $\{\bw^{(t)}\}_{t=0}^\infty$ of designs from $\bbQ^n_+$ converges to $\bw^a$, then, due to the continuity of $S(\cdot)$ on $\bbQ^n_+$, the upper bounds $(\d^{-1}_{x_+}+\d^{-1}_{x_-})S(\bw^{(t)})$ on $\tw_{x_+ x_-}(\bw^{(t)})$ converges to $0$. Consequently, applying the squeeze theorem, the non-negative numbers $\tw_{x_+ x_-}(\bw^{(t)})$ converge to $0=\tw_{x_+ x_-}(\bw^a)$. 

The normalization property \eqref{eq:bary1} of $\tw_{x_+x_-}$, $x_+ \in \X_+, x_- \in \X_-$, and $\tw_{x_0}$, $x_0 \in \X_0$ is straightforward to verify. We will check \eqref{eq:baryw}. Let $\bw \in \bbQ^n_+$ be such that $S(\bw)>0$. The component of the left-hand side of \eqref{eq:baryw} corresponding to $y_+ \in \X_+$ is
 \begin{equation*}
  \sum_{x_+ \in \X_+}\sum_{x_- \in \X_-}\tw_{x_+ x_-}(\bw)q^{(x_+,x_-)}_{y_+}+\sum_{x_0 \in \X_0} \tw_{x_0}(\bw)q^{(x_0)}_{y_+} = \sum_{x_- \in \X_-}  \frac{w_{y_+} w_{x_-}}{S(\bw)} \d_{x_-} = w_{y_+}.
\end{equation*}
 If $S(\bw)=0$, then $w_{x_+}=0$ for all $x_+ \in \X_+$ and $w_{x_-}=0$ for all $x_- \in \X_-$, which means that $\tw_{x_+ x_-}(\bw)=0$ for all $x_+,x_-$. Moreover, $q^{(x_0)}_{y_+}=0$ for all $x_0 \in \X_0$, that is, the left-hand side of \eqref{eq:baryw} is equal to zero, as required. Analogous proof is possible for $y_- \in \X_-$ and for $y_0 \in \X_0$.
\end{proof}

\begin{proof}[Proof of Theorem \ref{thm:convergence}]
To shorten the notation of some formulas, we will use $g_x(\bw):=d_x(\bw)-m$ for all $x \in \X$.

  Since $\bbQ^n_+$ is compact, the sequence $\{\bw^{(t)}\}_{t=0}^\infty \subset \bbQ^n_+$ has a limit point $\bw^{(\infty)} \in \bbQ^n_+$. Lemma 2 in \cite{Harman14} implies that non-singular matrices $\bM(\bw^{(t)})$ converge to some non-singular matrix $\bM^{(\infty)}$. From the continuity of $\bM(\cdot)$ it follows that $\bM^{(\infty)}=\bM(\bw^{(\infty)})$. Thus, $\bw^{(\infty)} \in \bbQ^n_r$ and the continuity of $\bM \to \bM^{-1}$ on the set of all non-singular $m \times m$ matrices gives: 
\begin{eqnarray}
  \lim_{t \to \infty}g_x(\bw^{(t)})&=&\lim_{t \to \infty}\bf^\top(x) \bM^{-1}(\bw^{(t)})\bf(x)-m \nonumber \\
  &=&\bf^\top(x) \bM^{-1}(\bw^{(\infty)})\bf(x)-m=g_x(\bw^{(\infty)}) \label{eq:convofg}
\end{eqnarray}  
for all $x \in \X$. Let
  \begin{equation*}
    x_+^* \in \mathrm{argmax}_{x_+ \in \X_+} \frac{g_{x_+}(\bw^{(\infty)})}{\d_{x_+}}, \: \:
    x_-^* \in \mathrm{argmax}_{x_- \in \X_-} \frac{g_{x_-}(\bw^{(\infty)})}{\d_{x_-}}, \: \: x_0^* \in \mathrm{argmax}_{x_0 \in \X_0} d_{x_0}(\bw^{(\infty)}).
  \end{equation*}

Note that there exists a constant $\gamma>0$ such that for any $t \in \bbN$ and any $x_+ \in \X_+,x_- \in \X_-$:
  \begin{equation*}
    \tilde{d}_{x_+ x_-}(\bw^{(t)}) =
    \frac{\frac{g_{x_+}(\bw^{(t)})}{\d_{x_+}}+\frac{g_{x_-}(\bw^{(t)})}{\d_{x_-}}}{\frac{1}{\d_{x_+}}+\frac{1}{\d_{x_-}}}+m \geq \gamma \left( \frac{g_{x_+}(\bw^{(t)})}{\d_{x_+}}+\frac{g_{x_-}(\bw^{(t)})}{\d_{x_-}} \right) + m,
  \end{equation*}
  which gives
  \begin{equation}\label{eq:d1}
  d^{\pi}_{x^*_+}(\bw^{(t)}) \geq \frac{\gamma}{m} \left( \frac{g_{x^*_+}(\bw^{(t)})}{\d_{x^*_+}} + \frac{\sum_{x_- \in \X_-} w^{(t)}_{x_-}g_{x_-}(\bw^{(t)})}{S(\bw^{(t)})} \right) + 1
  \end{equation}
  and similarly
  \begin{equation}\label{eq:d2}
  d^{\pi}_{x^*_-}(\bw^{(t)}) \geq \frac{\gamma}{m} \left( \frac{g^*_{x_-}(\bw^{(t)})}{\d_{x^*_-}} + \frac{\sum_{x_+ \in \X_+} w^{(t)}_{x_+}g_{x_+}(\bw^{(t)})}{S(\bw^{(t)})} \right) + 1. 
  \end{equation}
 
  Assume that $\bw^{(\infty)}$ is not $D$-optimal in $\bbQ^n_+$. Then, using Theorem \ref{thm:equivalence}, we see that
  \begin{eqnarray*}
   && \text{either (i) } g_{x^*_0}(\bw^{(\infty)}) > 0,\\
   && \text{or (ii) } g_{x^*_0}(\bw^{(\infty)}) \leq 0,  \text{ and } \frac{g_{x_+^*}(\bw^{(\infty)})}{\d_{x_+^*}}+\frac{g_{x_-^*}(\bw^{(\infty)})}{\d_{x_-^*}}>0.  
  \end{eqnarray*}

 Assume (i). From \eqref{eq:convofg} with $x=x^*_0$ we see that there exist some $t_1 \in \bbN$ and $\epsilon>0$ such that $g_{x^*_0}(\bw^{(t)}) \geq \epsilon m$ for all $t \geq t_1$, i.e., $m^{-1}d_{x^*_0}(\bw^{(t)}) \geq 1+\epsilon$ for all $t \geq t_1$. But the transformation rules \eqref{eq:bary} and \eqref{eq:baryzero} of the barycentric algorithm give
\begin{equation*} 
 w^{(t)}_{x^*_0}=\prod_{s=t_1}^{t-1} m^{-1}d_{x^*_0}(\bw^{(s)})w_{x^*_0}^{(t_1)} \geq (1+\epsilon)^{t-t_1}w^{(t_1)}_{x^*_0},
\end{equation*}
which converges to infinity for $t \to \infty$. This is a contradiction because $w^{(t)}_{x^*_0} \leq 1$ for all $t \in \bbN$.
  
  Assume (ii). From \eqref{eq:convofg} with $x=x^*_+$ and $x=x^*_-$ we see that there is some $t_1 \in \bbN$ and $\epsilon>0$ such that for all $t \geq t_1$:
  \begin{equation}\label{eq:eps}
    \frac{g_{x_+^*}(\bw^{(t)})}{\d_{x_+^*}}+\frac{g_{x_-^*}(\bw^{(t)})}{\d_{x_-^*}} \geq \epsilon.
  \end{equation}
  For all $t \in \bbN$ the simple equality $\sum_{x \in \X} w_xd_x(\bw^{(t)})=m$ yields
  \begin{equation} \label{eq:zerosum}
   \frac{\sum_{x_+ \in \X_+}w^{(t)}_{x_+}g_{x_+}(\bw^{(t)})}{S(\bw^{(t)})} + \frac{\sum_{x_- \in \X_-} w^{(t)}_{x_-}g_{x_-}(\bw^{(t)})}{S(\bw^{(t)})} = -
 \frac{ \sum_{x_0 \in \X_0} w^{(t)}_{x_0}g_{x_0}(\bw^{(t)})}{S(\bw^{(t)})}.
  \end{equation}
  From the assumption $g_{x^*_0}(\bw^{(\infty)}) \leq 0$ we have $g_{x_0}(\bw^{(\infty)}) \leq 0$ for all $x_0 \in \X_0$, i.e., \eqref{eq:convofg} implies that for all $x_0 \in \X_0$ the sequence $\{g_{x_0}(\bw^{(t)})\}_{t=0}^{\infty}$ converges to some non-positive value. Because the weights $w^{(t)}_{x_0}$ are bounded and we assume that $\liminf_{t \to \infty}S(\bw^{(t)})>0$, it is clear that the limit inferior of the right-hand side of \eqref{eq:zerosum} is non-negative. Therefore, \eqref{eq:zerosum} ensures that there is $t_2 \in \bbN$ such that for all $t \geq t_2$:
    \begin{equation}\label{eq:halfeps}
  \frac{\sum_{x_+ \in \X_+} w^{(t)}_{x_+}g_{x_+}(\bw^{(t)})}{S(\bw^{(t)})}+\frac{\sum_{x_- \in \X_-} w^{(t)}_{x_-}g_{x_-}(\bw^{(t)})}{S(\bw^{(t)})} \geq -\frac{\epsilon}{2}.
  \end{equation}
   
  Summing \eqref{eq:d1} with \eqref{eq:d2}, then, using \eqref{eq:eps} and \eqref{eq:halfeps}, we see that for all $t \geq \max(t_1,t_2)$:
  \begin{equation}\label{eq:boundonsum}
    d_{x_+^*}^{\pi}(\bw^{(t)})+d_{x_-^*}^{\pi}(\bw^{(t)}) \geq 2+\frac{\gamma \epsilon}{m}-\frac{\gamma (\epsilon/2)}{m}=2+\frac{\gamma \epsilon}{2m}.
  \end{equation}
  At the same time, using \eqref{eq:d1} and \eqref{eq:zerosum}, we obtain
  \begin{eqnarray}
      d_{x_+^*}^{\pi}(\bw^{(t)}) &\geq & \frac{\gamma}{m} \left(\frac{g_{x_+^*}(\bw^{(t)})}{\d_{x_+^*}} - \frac{\sum_{x_+ \in \X_+} w^{(t)}_{x_+}\d_{x_+}\frac{g_{x_+}(\bw^{(t)})}{\d_{x_+}}}{S(\bw^{(t)})} - 
     \frac{\sum_{x_0 \in \X_0} w^{(t)}_{x_0}g_{x_0}(\bw^{(t)})}{S(\bw^{(t)})} \right) + 1 \nonumber\\
      &\geq& \frac{\gamma}{m} \left( \frac{g_{x_+^*}(\bw^{(t)})}{\d_{x_+^*}} - \max_{x_+ \in \X_+}\frac{g_{x_+}(\bw^{(t)})}{\d_{x_+}}- \frac{\sum_{x_0 \in \X_0} w_{x_0}^{(t)}g_{x_0}(\bw^{(t)})}{S(\bw^{(t)})} \right) + 1. \label{eq:dpiineq}
  \end{eqnarray}  
 We again obtained the term that appeared at the right-hand side of \eqref{eq:zerosum}, and, as we have already shown, its limit inferior is non-negative. Note also that from \eqref{eq:convofg} and from the definition of $x^*_+$ we have
 \begin{equation*}
  \lim_{t \to \infty} \left( \frac{g_{x_+^*}(\bw^{(t)})}{\d_{x_+^*}} - \max_{x_+ \in \X_+}\frac{g_{x_+}(\bw^{(t)})}{\d_{x_+}} \right)=0.
 \end{equation*}
 Thus, \eqref{eq:dpiineq} proves that the limit inferior of $d_{x_+^*}^{\pi}(\bw^{(t)})$ is greater or equal to $1$. Similarly, it can be shown that $d_{x_-^*}^{\pi}(\bw^{(t)})$ has also limit inferior greater or equal to $1$. Therefore, we have 
 \begin{equation*}
   \min (d_{x_+^*}^{\pi}(\bw^{(t)}),d_{x_-^*}^{\pi}(\bw^{(t)})) \geq  \left(1+ \frac{\gamma\epsilon}{8m}\right)/\left(1+ \frac{\gamma\epsilon}{4m}\right)
  \end{equation*}
 for all sufficiently large $t$. Using this fact together with \eqref{eq:boundonsum} we see that there exists some $t_3 \in \bbN$, such that
  \begin{equation*}
   d_{x_+^*}^{\pi}(\bw^{(t)})d_{x_-^*}^{\pi}(\bw^{(t)}) \geq \frac{d_{x_+^*}^{\pi}(\bw^{(t)}) + d_{x_-^*}^{\pi}(\bw^{(t)})}{2} \min (d_{x_+^*}^{\pi}(\bw^{(t)}),d_{x_-^*}^{\pi}(\bw^{(t)})) \geq 1+ \frac{\gamma\epsilon}{8m}
  \end{equation*} 
   for all $t > t_3$. Hence, the definition of $\bd^{\pi}(\bw^{(t)})$ and the form of the transformations \eqref{eq:baryplus}, \eqref{eq:baryminus} imply that
\begin{eqnarray*}  
  w^{(t)}_{x_+^*}w^{(t)}_{x_-^*}=\left(\prod_{s=t_3}^{t-1} d_{x_+^*}^{\pi}(\bw^{(s)})d_{x_-^*}^{\pi}(\bw^{(s)}) \right) w^{(t_3)}_{x_+^*}w^{(t_3)}_{x_-^*} \geq \left(1+ \frac{\gamma\epsilon}{8m}\right)^{t-t_3} w^{(t_3)}_{x_+^*}w^{(t_3)}_{x_-^*},
\end{eqnarray*}  
 which converges to infinity as $t \to \infty$. This is a contradiction since $w^{(t)}_{x_+^*}w^{(t)}_{x_-^*} \leq 1$ for all $t \in \bbN$.
  
  Consequently, the limit point $\bw^{(\infty)}$ of $\{\bw^{(t)}\}_{t=0}^\infty$ is $D$-optimal in $\bbQ^n_+$, which implies the statement of Theorem \ref{thm:convergence}.
\end{proof}

\begin{proof}[Proof of Lemma \ref{lem:guarantee}]
Let $\liminf_{t \to \infty} S(\bw^{(t)})=0$. Compactness of $\bbQ^n_+$ guarantees that there exists some increasing sequence $\{t_i\}_{i=1}^{\infty}$ of natural numbers, such that $\lim_{i \to \infty} S(\bw^{(t_i)})=0$ and, simultaneously, $\lim_{i \to \infty}\bw^{(t_i)}=\bw^{(\infty)}$ for some $\bw^{(\infty)} \in \bbQ^n_+$. However, $\lim_{i \to \infty} S(\bw^{(t_i)})=0$ implies $\lim_{i \to \infty} w^{(t_i)}_{x_+}=0$ for all $x_+ \in \X_+$ and $\lim_{i \to \infty} w^{(t_i)}_{x_-}=0$ for all $x_- \in \X_-$, i.e., $\bw^{(\infty)}$ has all components zero, except for some $x_0 \in \X_0$, which means that $\phi(\bw^{(\infty)}) \leq v_0$. Thus, the continuity of $\phi$ yields $\lim_{i \to \infty} \phi(\bw^{(t_i)}) = \phi(\bw^{(\infty)}) \leq v_0$. But the sequence $\{\phi(\bw^{(t)})\}_{t=0}^\infty$ of criterial values of designs generated by the barycentric algorithm is non-decreasing, therefore $\phi(\bw^{(s)}) \leq v_0$ for all $s \in \{0,1,2,...\}$. This contradicts an assumption of the lemma, namely $\phi(\bw^{(s)}) > v_0$ for some $s \in \{0,1,2,...\}$.
\end{proof}

\section*{Acknowledgements}
This research was supported by the Slovak VEGA-Grant No. 1/0163/13. 


\end{document}